\documentclass[11pt]{article}
%%%%%%%%%%%%%%%%%%%%%%%%%%%%%%%%%%%%%%%%%%%%%
%%%% if using simple latex comment the following two lines 
%\usepackage{fontspec} % use xelatex to pdf
%\setmainfont{Times New Roman} % use xelatex to pdf
\usepackage{times}
%%%%%%%%%%%%%%%%%%%%%%%%%%%%%%%%%%%%%%%%%%%%%
%%%%%%%%%%%%%%%%%%%%%%%%%%%%%%%%%%%%%%%%%%%%%

%%%%%%%%%%%%%%%%%%%%%%%%%%%%%%%%%%%%%%%%%%%%%%%%%%%%%%%%%%%%%%%%%%%%%%%%%%%%%%%%%%%%%%%%%%%%%%%%%%%%%%%%%%%%%%%
%%%%%%%%%%%%%%%%%%%%%%%%%%%%%%%%%%%%%%%%%%%%%%%%%%%%%%%%%%%%%%%%%%%%%%%%%%%%%%%%%%%%%%%%%%%%%%%%%%%%%%%%%%%%%%%
%%%%%%%%%%%%%%%%%%%%%%%%%%%%%%%%%%%%%%%%%%%%%%%%%%%%%%%%%%%%%%%%%%%%%%%%%%%%%%%%%%%%%%%%%%%%%%%%%%%%%%%%%%%%%%%
%%%%%%%%%%%%%%%%%%%%%%%%%%%%%%%%%%%%%%%%%%%%%%%%%%%%%%%%%%%%%%%%%%%%%%%%%%%%%%%%%%%%%%%%%%%%%%%%%%%%%%%%%%%%%%%
%%%%%%%%%%%%%%%%%%%%%%%%%%%%%%%%%%%%%%%%%%%%%%%%%%%%%%%%%%%%%%%%%%%%%%%%%%%%%%%%%%%%%%%%%%%%%%%%%%%%%%%%%%%%%%%
%%%%%%%%%%%%%%%%%%%%%%%%%%%%%%%%%%%%%%%%%%%%%%%%%%%%%%%%%%%%%%%%%%%%%%%%%%%%%%%%%%%%%%%%%%%%%%%%%%%%%%%%%%%%%%%
\usepackage{changepage}   % for the adjustwidth environment
\usepackage{sidecap}
\usepackage{frcursive}
\usepackage{setspace}
\usepackage{caption}
% code to define \subsubsubsection etc.
\usepackage{titlesec}

\titleclass{\subsubsubsection}{straight}[\subsection]

\newcounter{subsubsubsection}[subsubsection]
\renewcommand\thesubsubsubsection{\thesubsubsection.\arabic{subsubsubsection}}
 % optional; useful if paragraphs are to be numbered

\titleformat{\subsubsubsection}
  {\normalfont\normalsize\bfseries}{\thesubsubsubsection}{1em}{}
\titlespacing*{\subsubsubsection}
{0pt}{3.25ex plus 1ex minus .2ex}{1.5ex plus .2ex}

\makeatletter
\renewcommand\paragraph{\@startsection{paragraph}{5}{\z@}%
  {3.25ex \@plus1ex \@minus.2ex}%
  {-1em}%
  {\normalfont\normalsize\bfseries}}
\renewcommand\subparagraph{\@startsection{subparagraph}{6}{\parindent}%
  {3.25ex \@plus1ex \@minus .2ex}%
  {-1em}%
  {\normalfont\normalsize\bfseries}}
\def\toclevel@subsubsubsection{4}
\def\toclevel@paragraph{5}
\def\toclevel@paragraph{6}
\def\l@subsubsubsection{\@dottedtocline{4}{9em}{4em}}
\def\l@paragraph{\@dottedtocline{5}{10em}{5em}}
\def\l@subparagraph{\@dottedtocline{6}{14em}{6em}}
\makeatother

\setcounter{secnumdepth}{4}
\setcounter{tocdepth}{4}

\usepackage{tikz}

\usetikzlibrary{arrows}
\usetikzlibrary{shapes}

\usepackage{kbordermatrix,blkarray}

\usepackage{wrapfig}
\usepackage{colortbl}
\usepackage{etex}
\usepackage{bbding}
\usepackage{dingbat}
\usepackage[ruled]{algorithm2e}
\usepackage{amsmath}
\usepackage{algorithmic}
\usepackage{amsfonts}
\usepackage{amsmath}
\usepackage{amssymb}
\usepackage{latexsym}
\usepackage{graphicx}
\usepackage{color}
\usepackage[normalem]{ulem}
\usepackage{balance}
\usepackage{textcomp}
\usepackage{nicefrac}
\usepackage{wasysym}
\usepackage{enumitem}
\usepackage{gensymb}
\usepackage{ctable}
\usepackage{url}
\usepackage{longtable}
\usepackage{citesort}

%%%%%%%%%%%%%%%%%%%%%%%%%%%%%%%%%%%%%%%%%%%%%
%%% for removing extra default space for a figure in LaTex
%%%%%%%%%%%%%%%%%%%%%%%%%%%%%%%%%%%%%%%%%%%%%
% See p.105 of "TeX Unbound" for suggested values.
% See pp. 199-200 of Lamport's "LaTeX" book for details.
%   General parameters, for ALL pages:
        % max fraction of floats at top
     % max fraction of floats at bottom
   
%   Parameters for TEXT pages (not float pages):
\setcounter{totalnumber}{2}
\setcounter{topnumber}{1}
\setcounter{bottomnumber}{1}
%   Parameters for FLOAT pages (not text pages):

%%%%%%%%%%%%%%%%%%%%%%%%%%%%%%%%%%%%%%%%%%%%%

\usepackage{pstricks,pst-node,pst-plot,pst-coil,pst-text,multido,pst-3d,pst-tree,pst-grad,graphics,graphicx}

\usepackage{frcursive}

\usepackage{pifont,epsfig,rotating}

\usepackage{etoolbox}

\DeclareMathAlphabet{\mathpzc}{OT1}{pzc}{m}{it}

\usepackage{etaremune}

\usepackage{mathtools}

\setlength{\topmargin}{0in} 
\setlength{\oddsidemargin}{0.0in}
\setlength{\evensidemargin}{0.0in}
\setlength{\headheight}{0in} 
\setlength{\headsep}{0in} 
\setlength{\textheight}{9in}
\setlength{\textwidth}{6.5in}

\definecolor{dgreyblue}{rgb}{0.26,0.3,0.46}             %89.25 102 158.10

\newcommand{\cA}{\mathcal{A}}
\newcommand{\cB}{\mathcal{B}}
\newcommand{\cC}{\mathcal{C}}
\newcommand{\cD}{\mathcal{D}}

\newcommand{\cZ}{\mathcal{Z}}

  %ams bold
  %ams bold

\renewcommand{\text}[1]{\hbox{\rm \ #1\ \/}}

\newcommand{\be}[1]{\begin{equation}\label{#1}}
\newcommand{\ee}{\end{equation}}
\newcommand{\beqn}{\begin{eqnarray*}}
\newcommand{\eeqn}{\end{eqnarray*}}
\newcommand{\beq}{\begin{eqnarray}}
\newcommand{\eeq}{\end{eqnarray}}
\newcommand{\ben}{\begin{enumerate}}
\newcommand{\een}{\end{enumerate}}
\newcommand{\bi}{\begin{itemize}}
\newcommand{\ei}{\end{itemize}}
\newcommand{\eps}{\varepsilon}

\newcommand{\IE}{{\em i.e.}\xspace}
\newcommand{\tx}{^{\mathrm{th}}}

\newtheorem{theorem}{Theorem}
\newtheorem{remark}{Remark}
\newtheorem{lemma}[theorem]{Lemma}
\newtheorem{corollary}[theorem]{Corollary}
\newtheorem{definition}[theorem]{Definition}

\newenvironment{proof}{{\noindent\bf Proof.\ }}{\hfill{\Pisymbol{pzd}{113}}\vspace{0.1in}}
\newenvironment{proof-sketch}{{\noindent\bf Sketch of Proof.\ }}{\hfill{\Pisymbol{pzd}{113}}\vspace{0.1in}}

\newcommand{\NP}{\mathsf{NP}}

\newcommand{\cP}{\mathcal{P}}
\newcommand{\EA}{{\em et al.}\xspace}
\newcommand{\TB}{\vspace{-0.1ex}}\newcommand{\TiE}{\setlength{\itemsep}{-1ex}}

\newcommand{\comment}[1]{}

\newcommand{\EG}{{\it e.g.}\xspace}
\newcommand{\FI}[1]{Fig.~\ref{#1}\xspace}

\newcommand{\Gr}{{\mathfrak{G}}}
\newcommand{\Ed}{{\mathfrak{E}}}
\newcommand{\Ve}{{\mathfrak{V}}}

\newcommand{\iif}{{\bf{if}}}
\newcommand{\tthen}{{\bf{then}}}

\newcommand{\ffor}{{\bf{for}}}

\newcommand{\rrepeat}{{\bf{repeat}}}
\newcommand{\ddo}{{\bf{do}}}

%%%%%%% new ones 

\newcommand{\cQ}{\mathcal{Q}}

\newcommand{\opt}{ {\mathsf{OPT}} }

\newcommand{\eqdef}{\stackrel{\mathrm{def}}{=}}

\newcommand{\mwvp}{{\sc Min}-{\sc wvp}}
%%%%%%%%%%%%%%%
\definecolor{columbiablue}{rgb}{0.61, 0.87, 1.0}

\newcommand{\aalpha}{{\mathrm{PartyA}}}
\newcommand{\bbeta}{{\mathrm{PartyB}}}
\newcommand{\eeta}{{\mathrm{Pop}}}

\allowdisplaybreaks
%%%%%%%%%%%%%%%%%%%%%%%%%%%%%%%%%%%%%%%%%%%%%%%%%%%%%%%%%%%%%%%%%%%%%%%%%%%%%%%%%%%%%%%%%%%%%%%%%%%%%%%%%%%%%%%
%%%%%%%%%%%%%%%%%%%%%%%%%%%%%%%%%%%%%%%%%%%%%%%%%%%%%%%%%%%%%%%%%%%%%%%%%%%%%%%%%%%%%%%%%%%%%%%%%%%%%%%%%%%%%%%

\title{Alleviating partisan gerrymandering: can math and computers help to eliminate wasted votes?
\thanks{The raw data reported in this paper are 
archived at http://www.cs.uic.edu/{\raise.17ex\hbox{$\scriptstyle\sim$}}dasgupta/gerrymander/index.html.
The source code of our implemented program will be eventually made freely available from the same website
after formal publication. All the authors declare \emph{no} competing interests.}
}

\author{
Tanima Chatterjee\thanks{Partially supported by NSF grant IIS-1160995.}
\qquad 
Bhaskar DasGupta$^\dagger$
\qquad 
Laura Palmieri
\and
Zainab Al-Qurashi
\qquad 
Anastasios Sidiropoulos\thanks{Supported by NSF CAREER award 1453472 and NSF grant CCF-1423230.}
\\
Department of Computer Science 
\\
University of Illinois at Chicago
\\
Chicago, IL 60607, USA
\\
Emails: {\sf \{tchatt2,bdasgup,lpalmi3,zalqur2,sidiropo\}@uic.edu}
}

\date{(Preliminary draft, subject to further revision, \today)}

\begin{document} 

\maketitle 

\begin{abstract}
Partisan gerrymandering is a major cause for voter disenfranchisement in United States.
However, convincing US courts to adopt specific measures to quantify gerrymandering has been of limited success to date. 
Recently, McGhee in~\cite{m14} and Stephanopoulos and McGhee in~\cite{sm15}
introduced a new and precise measure of partisan gerrymandering via the so-called 
''efficiency gap'' that computes the absolutes difference of wasted votes 
between two political parties in a two-party system. 
Quite importantly from a \emph{legal point of view}, this measure was found \emph{legally convincing} enough in a US appeals court in 
a case that claims that the legislative map of the state of Wisconsin was gerrymandered;  
the case is now pending in US Supreme Court (Gill v.\ Whitford, US Supreme Court docket no $16$-$1161$, decision pending).
In this article, we show the following: 
\begin{enumerate}[label=$\triangleright$]
\item
We provide interesting mathematical and computational complexity properties of 
the problem of minimizing the efficiency gap measure.
To the best of our knowledge, these are the \emph{first non-trivial} 
theoretical and algorithmic analyses of this measure of gerrymandering. 
\item
We provide a simple and fast algorithm that 
can ``un-gerrymander'' 
the district maps for the states of Texas, Virginia, Wisconsin 
and Pennsylvania
by bring their efficiency gaps to acceptable levels from the current unacceptable levels. 
Our work thus shows that, notwithstanding the general worst-case approximation hardness of the efficiency gap measure
as shown by us,
finding district maps with acceptable levels of efficiency gaps 
is a \emph{computationally tractable problem from a practical point of view}.
Based on these empirical results, we also provide some interesting insights into three 
practical issues related the efficiency gap measure.
\end{enumerate}
We believe that, should the US Supreme Court uphold the decision of lower courts, 
our research work and software will provide a crucial \emph{supporting hand} to remove partisan gerrymandering.
\end{abstract}

%%%%%%%%%%%%%%%%%%%%%%%%%%%%%%%%%%%%%%%%%%%%%%%%%%%%%%%%%%%%%%%%%%%%%%%%%%%%%%%
%%%%%%%%%%%%%%%%%%%%%%%%%%%%%%%%%%%%%%%%%%%%%%%%%%%%%%%%%%%%%%%%%%%%%%%%%%%%%%%
%%%%%%%%%%%%%%%%%%%%%%%%%%%%%%%%%%%%%%%%%%%%%%%%%%%%%%%%%%%%%%%%%%%%%%%%%%%%%%%

\section{Introduction and Motivation}

Gerrymandering, namely creation of district plans with highly asymmetric electoral outcomes
to disenfranchise voters, has continued to be a curse to fairness of electoral systems in USA for a long time in spite 
of general public disdain for it. Even though 
the US Supreme Court ruled in $1986$~\cite{t1}
that gerrymandering is \emph{justiciable}, 
they could not agree on an effective way of estimating it. Indeed,
a huge impediment to removing gerrymandering lies in 
formulating an effective and precise measure for it that will be acceptable in courts.

In $2006$, the US Supreme Court opined~\cite{t2} 
that a measure of \emph{partisan symmetry} may be a helpful
tool to understand and remedy gerrymandering.
Partisan symmetry is a standard for 
defining partisan gerrymandering that 
involves the computation of counterfactuals
typically under the assumption of uniform swings. 
To illustrate lack of partisan symmetry 
consider a two-party voting district and suppose that Party~A wins 
by getting $60\%$ of total votes and $70\%$ of total seats. 
In such a case, a partisan symmetry standard would hold if 
Party~B would also win $70\%$ of the seats had it won $60\%$ of the votes
in a \emph{hypothetical} election.
%
%Then, for Party~A, $60-51=9$ votes are wasted since it needed 
%only $51$ votes to win. On the other hand, all the $40$ votes are wasted for the loser Party~B.  
%This discrepancy between $10$ vs. $40$ wasted votes reflects in some sense a lack of partisan 
%symmetry in the electoral process. The party with a higher number of wasted votes could 
%improve its outcome by moving some of them to a different district.
%Considering the total wasted votes over all districts in a state, 
%it can then be seen that 
%the party with fewer wasted votes is generally expected to win more seats by smaller margins than the opposition.
%
Two frequent indicators cited for lack of partisan symmetry are 
\emph{cracking}, namely 
dividing supporters of a specific party between two
or more districts when they could be a majority in a single district, 
and \emph{packing}, namely 
filling a district with more and more supporters of a specific party 
as long as this does not make this specific party the winner in that district.

There have been many theoretical and empirical attempts at remedying the lack of partisan symmetry
by ``quantifying'' gerrymandering and devising redistricting methods to optimize such 
quantifications using well-known notions such as 
\emph{compactness} and \emph{symmetry}~\cite{NGCH90,CR15,C85,wcl16,ND78,J94,Alt02,AG94}.
Since it is often simply \emph{not} possible to go over every possible redistricting map to optimize the gerrymandering 
measure due to rapid \emph{combinatorial explosion}, researchers such as~\cite{LWCW16,TL67,CR15} have also investigated 
designing efficient \emph{algorithmic approach} for this purpose.
In particular, a popular gerrymandering measure 
in the literature is \emph{symmetry}, which attempts to quantify the discrepancy between 
the share of votes and the share of seats of a party~\cite{ND78,J94,Alt02,AG94}.
In spite of such efforts, their success in convincing courts to adopt one or more of these measures has been 
unfortunately somewhat limited to date.

Recently, researchers Stephanopoulos and McGhee
in two papers~\cite{m14,sm15} have introduced a new gerrymandering measure called the 
``efficiency gap''. Informally speaking, the efficiency gap measure attempts to \emph{minimize} the 
\emph{absolute difference} of total wasted
votes between the parties in a two-party electoral system. 
This measure is very promising in several aspects. Firstly, it 
provides a mathematically precise measure of gerrymandering with many desirable properties. Equally importantly, 
at least from a legal point of view, this measure was found legally convincing in a US appeals court in 
a case that claims that the legislative map of the state of Wisconsin is gerrymandered;  
the case is now pending in US Supreme Court~\cite{t3}.

\subsection{Informal Overview of Our Contribution and Its Significance}

Redistricting based on minimizing the efficiency gap measure however requires one to find a solution to an 
combinatorial optimization problem.
To this effect, the contribution of this article is as follows:
%%%%%%%%%%%%%%%%%%%%%%%%%%%%%%%%%%%%%%%%%%%%%%%%%%%%%%%%%%%%%%%%%%%%%%%%%%%%%%%%%%%%%%%%%%%
\begin{enumerate}[label=$\triangleright$]
\item
As a necessary first step towards investigating the efficiency gap measure, 
in Section~\ref{sec-form}
we first formalize the optimization problem that corresponds 
to minimizing the efficiency gap measure.
%%%%%%%%%%%%%%%%%%%%%%%%%%%%%%%%%%%%%%%%%%%%%%%%%%%%%%%%%%%%%%%%%%%%%%%%%%%%%%%%%%%%%%%%%%%
\item
Subsequently, in Section~\ref{sec-math-prop}
we study the mathematical properties of the formalized version of the measure.
Specifically, 
Lemma~\ref{lem-abs} and Corollary~\ref{cor-abs}
show that the efficiency gap measure attains only a \emph{finite discrete set} of \emph{rational} values; 
these properties are of considerable importance in understanding 
the sensitivity of the measure and in designing efficient algorithms for computing this measure.
%%%%%%%%%%%%%%%%%%%%%%%%%%%%%%%%%%%%%%%%%%%%%%%%%%%%%%%%%%%%%%%%%%%%%%%%%%%%%%%%%%%%%%%%%%%
\item
Next, in Sections~\ref{sec-complexity}~and~\ref{sec-approx} we investigate computational complexity and algorithm design issues 
of redistricting based on the efficiency gap measure.
Although Theorem~\ref{thm-hardness}
shows that in theory one can construct \emph{artificial pathological examples} for which designing efficient algorithms
is provably hard, Theorem~\ref{app1} and Theorem~\ref{bhubhu} 
provide justification as to why the result in Theorem~\ref{thm-hardness}
is overly pessimistic for real data that do not necessarily correspond to these pathological examples. 
For example, 
assuming that the districts are \emph{geometrically compact} (\emph{$y$-convex} in our terminology), 
Theorem~\ref{bhubhu} shows how to find a district map \emph{efficiently in polynomial time} that \emph{minimizes} the 
efficiency gap.
%%%%%%%%%%%%%%%%%%%%%%%%%%%%%%%%%%%%%%%%%%%%%%%%%%%%%%%%%%%%%%%%%%%%%%%%%%%%%%%%%%%%%%%%%%%
\item
Finally, to show that it is indeed possible \emph{in practice} to solve the problem of minimization of the efficiency gap, 
in Section~\ref{sec-empirical} 
we design a \emph{fast randomized} algorithm based on the \emph{local search paradigm in combinatorial optimization} for this problem 
(cf.\ \FI{alg1}).
Our resulting software was tested on four electoral data 
for the $2012$ election of the (federal) house of representatives for  
the US states of Wisconsin~\cite{web-map-wi,web-wi}, Texas~\cite{web-map-tx,web-tx}, 
Virginia~\cite{web-va,web-map-va} and Pennsylvania~\cite{web-pa,web-map-pa}.
\emph{The results computed by our fast algorithm are truly outstanding: 
the final efficiency gap was lowered to $3.80\%$, $3.33\%$, $3.61\%$ and $8.64\%$
from $14.76\%$, $4.09\%$, $22.25\%$ and $23.80\%$ 
for Wisconsin, Texas, Virginia and Pennsylvania, respectively, in a small amount of time}.
Our empirical results clearly show that it is very much possible 
to design and implement a very fast algorithm 
that can ``un-gerrymander'' (with respect to the efficiency gap measure) 
the gerrymandered US house districts of four US states.

Based on these empirical results, we also provide some interesting insights into three 
practical issues related the efficiency gap measure, namely issues pertaining to 
\emph{seat gain vs.\ efficiency gap},
\emph{compactness vs.\ efficiency gap} and 
\emph{the naturalness of original gerrymandered districts}. 
\end{enumerate}
%%%%%%%%%%%%%%%%%%%%%%%%%%%%%%%%%%%%%%%%%%%%%%%%%%%%%%%%%%%%%%%%%%%%%%%%%%%%%%%%%%%%%%%%%%%
To the best of our knowledge, our results are \emph{first algorithmic analysis and implementation of minimization 
of the efficiency gap measure}.
Our results show that \emph{it is practically feasible to redraw district maps in a small amount of time to remove 
gerrymandering based on the efficiency gap measure}.
Thus, should the Supreme Court uphold the ruling of the lower court, 
our algorithm and its implementation will be a \emph{necessary} and \emph{valuable} asset to remove partisan gerrymandering.

\subsection{Beyond scientific curiosity: impact on US judicial system}

Beyond its scientific implications on the science of gerrymandering, 
we expect our algorithmic analysis and results to have a beneficial impact on the US judicial system
also. Some justices, whether at the Supreme Court level or in lower courts, 
seem to have a reluctance to taking mathematics, statistics and computing seriously~\cite{R03,Fa89},
For example, during the hearing in our previously cite most recent US Supreme Court case on gerrymandering~\cite{t3},
some justices opined that the math was \emph{unwieldy}, \emph{complicated}, and \emph{newfangled}. 
One justice called it \emph{baloney}, and 
Chief Justice John Roberts dismissed the attempts to quantify partisan gerrymandering by saying 

\begin{quote}
``It may be simply my educational background, but I can only describe it as \emph{sociological gobbledygook}.''
\end{quote}

Our theoretical and computational results show that the math, whether complicated or not (depending on one's background), 
\emph{can} in fact yield fast accurate
computational methods that can indeed be applied to un-gerrymander the currently gerrymandered maps. 

\subsection{Some Remarks and Explanations Regarding the Technical Content of This Paper}

To avoid any possible misgivings or confusions regarding the technical content of the paper as well as 
to help the reader towards understanding the remaining content of this article, we believe the following 
comments and explanations may be relevant.
\emph{We encourage the reader to read this section and explore the references mentioned therein before proceeding further}.
%%%%%%%%%%%%%%%%%%%%%%%%%%%%%%%%%%%%%%%%%%%%%%%%%%%%%%%%%%%%%%%%%%%%%%%%%%%%%%%%%%%%%%%%%%%
\begin{enumerate}[label=$\blacktriangleright$,leftmargin=*]
\item
We employ a randomized local-search heuristic for combinatorial optimization for our algorithm in \FI{alg1}. 
Our algorithmic paradigm is \emph{quite different} from 
\emph{Markov Chain Monte Carlo simulation}, \emph{simulated annealing approach}, \emph{Bayesian methods} 
and related similar other methods (\EG, no temperature parameter, no Gibbs sampling, no 
calculation of transition probabilities based on Markov chain properties, \emph{etc}.).
Thus, for example, our algorithmic paradigm and analysis for the efficiency gap measure 
is different and incomparable to that used by 
researchers for other different measure, such as
by Herschlag, Ravier and Mattingly~\cite{HRM17},
by Fifield \EA~\cite{FHIT15}
or by Cho and Liu~\cite{wcl16}.

For a detailed exposition of \emph{randomized algorithms} the reader is referred to excellent textbooks 
such as~\cite{MR95,AS16} and for a 
detailed exposition of \emph{local-search algorithmic paradigm} in combinatorial optimization 
the reader is referred to the excellent textbook~\cite{AL03}.
%%%%%%%%%%%%%%%%%%%%%%%%%%%%%%%%%%%%%%%%%%%%%%%%%%%%%%%%%%%%%%%%%%%%%%%%%%%%%%%%%%%%%%%%%%%
\item
While we do provide several non-trivial theoretical algorithmic results, we do
not provide any theoretical analysis of the randomized algorithm in \FI{alg1}. 
The justification for this is that, due to Theorem~\ref{thm-hardness} and Lemma~\ref{lemma-hardness},
\emph{no such non-trivial theoretical algorithmic complexity results exist} in general 
assuming P$\neq\NP$ for deterministic local-search algorithms or 
assuming RP$\neq\NP$ for randomized local-search algorithms. 
One can attribute this to the usual ``difference between theory and practice'' doctrine.

For readers unfamiliar with 
the complexity-theoretic assumptions P$\neq\NP$ and RP$\neq\NP$, these are \emph{core} complexity-theoretic 
assumptions that have been routinely used for decades in the field of algorithmic complexity analysis.
For example, starting with the famous Cook's theorem~\cite{C71} in $1971$ and Karp's subsequent
paper in $1972$~\cite{K72}, the P$\neq\NP$ assumption is the central assumption in structural complexity
theory and algorithmic complexity analysis. For a detailed technical coverage of the basic structural
complexity field, we refer the reader to the excellent textbook~\cite{BDG95}.
%%%%%%%%%%%%%%%%%%%%%%%%%%%%%%%%%%%%%%%%%%%%%%%%%%%%%%%%%%%%%%%%%%%%%%%%%%%%%%%%%%%%%%%%%%%
\item
In this article we use the data at the county level as opposed to using data at finer (more granular) level 
such as the ``Voting Tabulation District'' (VTD) level 
(VTDs are the smallest units in a state for which the election data are available).
The reason for this is as follows.
Note that our algorithmic approach already returns an efficiency gap of below $4\%$ for three states (namely, 
WI, TX and VA), and for PA it cuts down the current efficiency gap by a factor of about $3$ (\emph{cf}.\ Table~\ref{t2}). 
This, together with the observation in~\cite[pp. 886-888]{sm15} that the efficiency gap should \emph{not} be minimized 
to a very low value to avoid unintended consequences, 
shows that even just by using county-level data our algorithm can already output 
\emph{almost desirable} (if not truly desirable) values of the efficient gap measure and thus, by 
Occam's razor principle\footnote{Occam's razor principle~\cite{Ock} states that 
``\emph{Entia non sunt multiplicanda praeter necessitatem}'' (\IE, 
more things should not be used than are necessary). It is also known as rule of parsimony 
in biological context~\cite{F72}. Overfitting is an example of violation of this principle.} 
widely used in computer science, we should not be using more data at finer levels. 
In fact, using more data at a finer level may lead to what is popularly known as ``overfitting'' in the 
context of machine learning and elsewhere~\cite{BA02} that may hide its true performance on yet unexplored maps.
In this context, our suggestion to future algorithmic researchers in this direction is to use 
a minimal amount of data that is truly necessary to generate an acceptable solution.
%%%%%%%%%%%%%%%%%%%%%%%%%%%%%%%%%%%%%%%%%%%%%%%%%%%%%%%%%%%%%%%%%%%%%%%%%%%%%%%%%%%%%%%%%%%
\item
In this article we are not comparing our approaches empirically to those in existing literature 
such as in~\cite{HRM17,FHIT15,wcl16}. The reason for this is that, to the best of our knowledge,
there is currently \emph{no} other published work that gives a software to optimize the \emph{efficiency gap 
measure}. In fact, it would be \emph{grossly unfair to other existing approaches} if we compare our results with 
their results. For example, suppose we consider an optimal result using an approach from~\cite{wcl16}
and find that it gives an efficiency gap of $15\%$ whereas the approach in this article gives an efficiency gap
of $5\%$. However, it would be grossly unfair to say that, based on this comparison, our algorithm is 
better than the one in~\cite{wcl16} since the authors in~\cite{wcl16} \emph{never} intended to minimize the efficiency gap. 
Furthermore, even the two maps cannot be compared directly by geometric methods since \emph{no court} has so far 
established a firm and unequivocal \emph{ground truth} on gerrymandering by having a ruling of the following form: 

\begin{quote}
[court]: ``\sout{a district map is gerrymandered \emph{if and only if} such-and-such conditions are satisfied}''
\end{quote}

\noindent
(the line is crossed out above just to doubly clarify that such a ruling does not exist).

For certain scientific research problems, 
algorithmic comparisons are possible because of the existence of ground 
truths (also called ''gold standards'' or ``benchmarks''). 
For example, different algorithmic approaches for \emph{reverse engineering} causal relationships
between components of a \emph{biological cellular system} can be compared by evaluating 
how close the methods under investigation are in recovering known gold standard networks
using widely agreed upon metrics such as \emph{recall rates} or \emph{precision values}~\cite{DL16}.
Unfortunately, for gerrymandering this is not the case and, in our opinion, comparison of
algorithms for gerrymandering that optimize \emph{substantially different} objectives should be viewed with
a grain of salt.
%%%%%%%%%%%%%%%%%%%%%%%%%%%%%%%%%%%%%%%%%%%%%%%%%%%%%%%%%%%%%%%%%%%%%%%%%%%%%%%%%%%%%%%%%%%
\item
The main research goal of this paper is to minimize the efficiency gap measure \emph{exactly as introduced 
by Stephanopoulos and McGhee in}~\cite{m14,sm15}. 
However, should future researchers like to introduce additional computable constraints or objectives, 
such as compactness or respect of community boundaries, 
on top of our efficiency gap minimization algorithm, it is a conceptually easy task to modify 
our algorithm in \FI{alg1} for this purpose.
For example, to introduce compactness on top of minimization of the efficiency gap measure, the following 
two lines in \FI{alg1} 

\begin{quote}
if $\mathsf{Effgap}_{\kappa}(\cP,\cQ'_1,\dots,\cQ'_\kappa)<\mathsf{Effgap}_{\kappa}(\cP,\cQ_1,\dots,\cQ_\kappa)$ then
\end{quote}

should be changed to something like (changes are indicated in \textbf{bold}): 

\begin{quote}
if $\mathsf{Effgap}_{\kappa}(\cP,\cQ'_1,\dots,\cQ'_\kappa)<\mathsf{Effgap}_{\kappa}(\cP,\cQ_1,\dots,\cQ_\kappa)$ 
\\
\hspace*{1in} \textbf{and each of $\pmb{\cQ'_1,\dots,\cQ'_\kappa}$ are compact}
then
\end{quote}

and appropriate \emph{minor} changes can be made to other parts of the algorithm for consistency with this modification.
\end{enumerate}
%%%%%%%%%%%%%%%%%%%%%%%%%%%%%%%%%%%%%%%%%%%%%%%%%%%%%%%%%%%%%%%%%%%%%%%%%%%%%%%%%%%%%%%%%%%

\section{Formalization of the Optimization Problem to Minimize the Efficiency Gap Measure}
\label{sec-form}

%%%%%%%%%%%%%%%%%%%%%%%%%%%%%%%%%%%%%%%%%%%%%%%%%%%%%%%%%%%%%%%%%%%%%%%%%%%%%%%%%%%%%%%%%%%%%%%%%%%%%%%
%%%%%%%%%%%%%%%%%%%%%%%%%%%%%%%%%%%%%%%%%%%%%%%%%%%%%%%%%%%%%%%%%%%%%%%%%%%%%%%%%%%%%%%%%%%%%%%%%%%%%%%
%%%%%%%%%%%%%%%%%%%%%%%%%%%%%%%%%%%%%%%%%%%%%%%%%%%%%%%%%%%%%%%%%%%%%%%%%%%%%%%%%%%%%%%%%%%%%%%%%%%%%%%
\begin{figure}[htbp]
\centerline{\includegraphics[scale=1.5]{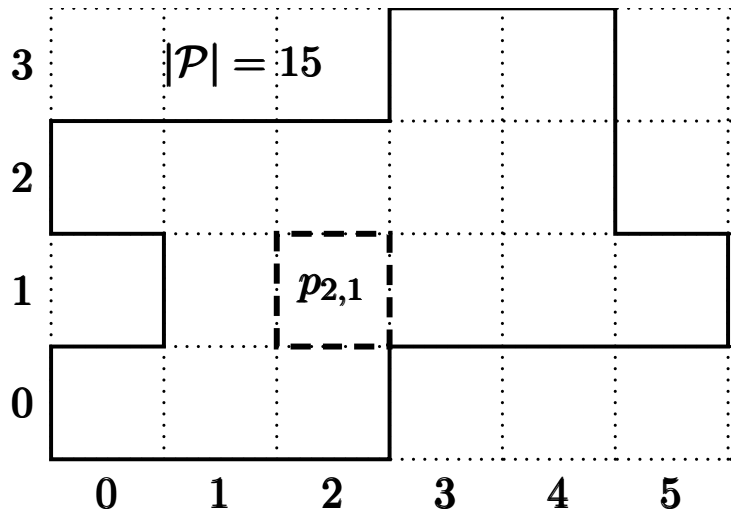}}
\caption{\label{ex1-fig}Input polygon $\cP$ of size $15$ placed on a grid of size $6\times 4$; the cell 
$p_{2,1}$ is shown.}
\end{figure}
%%%%%%%%%%%%%%%%%%%%%%%%%%%%%%%%%%%%%%%%%%%%%%%%%%%%%%%%%%%%%%%%%%%%%%%%%%%%%%%%%%%%%%%%%%%%%%%%%%%%%%%
%%%%%%%%%%%%%%%%%%%%%%%%%%%%%%%%%%%%%%%%%%%%%%%%%%%%%%%%%%%%%%%%%%%%%%%%%%%%%%%%%%%%%%%%%%%%%%%%%%%%%%%
%%%%%%%%%%%%%%%%%%%%%%%%%%%%%%%%%%%%%%%%%%%%%%%%%%%%%%%%%%%%%%%%%%%%%%%%%%%%%%%%%%%%%%%%%%%%%%%%%%%%%%%

Based on~\cite{m14,sm15}, we abstract our problem in the following manner.
We are given a rectilinear polygon $\cP$ without holes. Placing $\cP$ on a unit grid of size $m\times n$, 
we will identify an individual unit square (a ``\emph{cell}'') on the $i\tx$ row and $j\tx$ column 
in $\cP$ by $p_{i,j}$ for $0\leq i<m$ and $0\leq j<n$ (see~\FI{ex1-fig}). 
For each cell $p_{i,j}\in\cP$, we are given the following three integers: 
%%%%%%%%%%%%%%%%%%%%%%%%%%%%%%%%%%%%%%%%%%%%%%%%%%%%%%%%%%%%%%%%%%%%%%%%%%%%%%%%%%%%%%%%%%%%%%%%%%%%%%%
\begin{enumerate}[label=$\blacktriangleright$]
%%%%%%%%%%%%%%%%%%%%%%%%%%%%%%%%%%%%%%%%%%%%%%%%%%%%%%%%%%%%%%%%%%%%%%%%%%%%%%%%%%%%%%%%%%%%%%%%%%%%%%%
\item
an integer $\eeta_{i,j}\geq 0$ (the ``total population'' inside $p_{i,j}$), and 
%%%%%%%%%%%%%%%%%%%%%%%%%%%%%%%%%%%%%%%%%%%%%%%%%%%%%%%%%%%%%%%%%%%%%%%%%%%%%%%%%%%%%%%%%%%%%%%%%%%%%%%
\item
two integers $\aalpha_{i,j},\bbeta_{i,j}\geq 0$ 
(the total number of voters for Party~A and Party~B, respectively)
such that $\aalpha_{i,j}+\bbeta_{i,j}=\eeta_{i,j}$. 
\end{enumerate}
%%%%%%%%%%%%%%%%%%%%%%%%%%%%%%%%%%%%%%%%%%%%%%%%%%%%%%%%%%%%%%%%%%%%%%%%%%%%%%%%%%%%%%%%%%%%%%%%%%%%%%%
Let 
$|\cP|=
\left| \,
\left\{ 
p_{i,j} \,:\,p_{i,j}\in\cP 
\right\}
\, \right|
$
denote the ``size'' (number of cells) of $\cP$. 
For a rectilinear polygon $\cQ$ included in the interior of $\cP$ (\IE, a connected subset of the interior of $\cP$), 
we defined the following quantities: 
%%%%%%%%%%%%%%%%%%%%%%%%%%%%%%%%%%%%%%%%%%%%%%%%%%%%%%%%%%%%%%%%%%%%%%%%%%%%%%%%%%%%%%%%%%%%%%%%%%%%%%%
\begin{description}[leftmargin=0.2cm]
\item[Party affiliations in $\cQ$:] 
$\aalpha(\cQ)=\sum_{p_{i,j}\in\cQ}\aalpha_{i,j}$ 
and 
$\bbeta(\cQ)=\sum_{p_{i,j}\in\cQ}\bbeta_{i,j}$.
%%%%%%%%%%%%%%%%%%%%%%%%%%%%%%%%%%%%%%%%%%%%%%%%%%%%%%%%%%%%%%%%%%%%%%%%%%%%%%%%%%%%%%%%%%%%%%%%%%%%%%%
\item[Population of $\cQ$:] 
$\eeta(\cQ)=\aalpha(\cQ)+\bbeta(\cQ)$.
%%%%%%%%%%%%%%%%%%%%%%%%%%%%%%%%%%%%%%%%%%%%%%%%%%%%%%%%%%%%%%%%%%%%%%%%%%%%%%%%%%%%%%%%%%%%%%%%%%%%%%%
\item[Efficiency gap of $\cQ$:] 
$\mathsf{Effgap}(\cQ)=$
\[
\left\{
\begin{array}{r l}
\Big(\aalpha(\cQ) - \frac{1}{2}\eeta(\cQ) \Big) - \bbeta(\cQ)=2\aalpha(\cQ) - \frac{3}{2}\eeta(\cQ), 
      & \mbox{if $\aalpha(\cQ) \geq \frac{1}{2}\eeta(\cQ)$}
\\
[5pt]
\aalpha(\cQ) - \Big(\bbeta(\cQ) - \frac{1}{2}\eeta(\cQ) \Big)=2\aalpha(\cQ) - \frac{1}{2}\eeta(\cQ), 
      & \mbox{otherwise}
\end{array}
\right.
\]
Note that if 
$\aalpha(\cQ)=\bbeta(\cQ)=\frac{1}{2}\eeta(\cQ)$ then $\mathsf{Effgap}(\cQ)=-\bbeta(\cQ)$, \IE, 
\emph{in case of a tie, we assume Party~A is the winner}.
Also, note that 
$\mathsf{Effgap}(\cQ)=0$ if and only if 
either 
$\aalpha(\cQ)= {\eeta(\cQ)}/{4}$
or 
$\bbeta(\cQ)= {\eeta(\cQ)}/{4}$.
\end{description}
%%%%%%%%%%%%%%%%%%%%%%%%%%%%%%%%%%%%%%%%%%%%%%%%%%%%%%%%%%%%%%%%%%%%%%%%%%%%%%%%%%%%%%%%%%%%%%%%%%%%%%%
Our problem can now be defined as follows.
%%%%%%%%%%%%%%%%%%%%%%%%%%%%%%%%%%%%%%%%%%%%%%%%%%%%%%%%%%%%%%%%%%%%%%%%%%%%%%%%%%%%%%%%%%%%%%%%%%%%%%%
%%%%%%%%%%%%%%%%%%%%%%%%%%%%%%%%%%%%%%%%%%%%%%%%%%%%%%%%%%%%%%%%%%%%%%%%%%%%%%%%%%%%%%%%%%%%%%%%%%%%%%%
%%%%%%%%%%%%%%%%%%%%%%%%%%%%%%%%%%%%%%%%%%%%%%%%%%%%%%%%%%%%%%%%%%%%%%%%%%%%%%%%%%%%%%%%%%%%%%%%%%%%%%%
%%%%%%%%%%%%%%%%%%%%%%%%%%%%%%%%%%%%%%%%%%%%%%%%%%%%%%%%%%%%%%%%%%%%%%%%%%%%%%%%%%%%%%%%%%%%%%%%%%%%%%%
%%%%%%%%%%%%%%%%%%%%%%%%%%%%%%%%%%%%%%%%%%%%%%%%%%%%%%%%%%%%%%%%%%%%%%%%%%%%%%%%%%%%%%%%%%%%%%%%%%%%%%%
\medskip

\begin{adjustwidth}{0.6cm}{}
\fbox
{
\begin{minipage}{0.93\textwidth}
\begin{description}
\item[Problem name:]
$\kappa$-district  Minimum Wasted Vote Problem 
(\mwvp$_{\kappa}$).
%%%%%%%%%%%%%%%%%%%%%%%%%%%%%%%%%%%%%%%%%%%%%%%%%%%%%%%%%%%%%%%%%%%%%%%%%%%%%%%%%%%%%%%%%%%%%%%%%%%%%%%
\item[Input:]
a rectilinear polygon $\cP$ with 
$\eeta_{i,j},\aalpha_{i,j},\bbeta_{i,j}$ 
for every cell $p_{i,j}\in\cP$, 
and a positive integer $1<\kappa\leq |\cP|$.
%%%%%%%%%%%%%%%%%%%%%%%%%%%%%%%%%%%%%%%%%%%%%%%%%%%%%%%%%%%%%%%%%%%%%%%%%%%%%%%%%%%%%%%%%%%%%%%%%%%%%%%
\item[Definition:]
a $\kappa$-equipartition of $\cP$ is a partition of the interior of $\cP$ into exactly $\kappa$ rectilinear 
polygons, say $\cQ_1,\dots,\cQ_\kappa$, 
such that 
$\eeta(\cQ_1)=\dots=\eeta(\cQ_\kappa)$.
%%%%%%%%%%%%%%%%%%%%%%%%%%%%%%%%%%%%%%%%%%%%%%%%%%%%%%%%%%%%%%%%%%%%%%%%%%%%%%%%%%%%%%%%%%%%%%%%%%%%%%%
\item[Assumption:]
$\cP$ has at least one $\kappa$-equipartition.
%%%%%%%%%%%%%%%%%%%%%%%%%%%%%%%%%%%%%%%%%%%%%%%%%%%%%%%%%%%%%%%%%%%%%%%%%%%%%%%%%%%%%%%%%%%%%%%%%%%%%%%
\item[Valid solution:]
Any $\kappa$-equipartition $\cQ_1,\dots,\cQ_\kappa$ of $\cP$.
%%%%%%%%%%%%%%%%%%%%%%%%%%%%%%%%%%%%%%%%%%%%%%%%%%%%%%%%%%%%%%%%%%%%%%%%%%%%%%%%%%%%%%%%%%%%%%%%%%%%%%%
\item[Objective:]
\emph{minimize} 
the \emph{total} absolute efficiency gap\footnotemark 
$
\mathsf{Effgap}_{\kappa}(\cP,\cQ_1,\dots,\cQ_\kappa)
=
\left| \,
\sum_{j=1}^\kappa \mathsf{Effgap}(\cQ_j)
\,\right|
$.
%%%%%%%%%%%%%%%%%%%%%%%%%%%%%%%%%%%%%%%%%%%%%%%%%%%%%%%%%%%%%%%%%%%%%%%%%%%%%%%%%%%%%%%%%%%%%%%%%%%%%%%
\item[Notation:]
$
\opt_\kappa(\cP)\eqdef
\min
\left\{ \,\mathsf{Effgap}_{\kappa}(\cP,\cQ_1,\dots,\cQ_\kappa) 
\,\,|
\text{ $\cQ_1,\dots,\cQ_\kappa$ is a $\kappa$-equipartition of $\cP$} 
\right\}
$.
%%%%%%%%%%%%%%%%%%%%%%%%%%%%%%%%%%%%%%%%%%%%%%%%%%%%%%%%%%%%%%%%%%%%%%%%%%%%%%%%%%%%%%%%%%%%%%%%%%%%%%%
\end{description}
\end{minipage}
}
\end{adjustwidth}
%%%%%%%%%%%%%%%%%%%%%%%%%%%%%%%%%%%%%%%%%%%%%%%%%%%%%%%%%%%%%%%%%%%%%%%%%%%%%%%%%%%%%%%%%%%%%%%%%%%%%%%
\footnotetext{Note that 
our notation uses the absolute value for $\mathsf{Effgap}_{\kappa}(\cP,\cQ_1,\dots,\cQ_\kappa)$
but not for individual $\mathsf{Effgap}(\cQ_j)$'s.}
%%%%%%%%%%%%%%%%%%%%%%%%%%%%%%%%%%%%%%%%%%%%%%%%%%%%%%%%%%%%%%%%%%%%%%%%%%%%%%%%%%%%%%%%%%%%%%%%%%%%%%%

\section{Mathematical Properties of Efficiency Gaps: Set of Values Attainable by the Efficiency Gap Measure}
\label{sec-math-prop}

The following lemma sheds some light on the set of rational numbers that the total efficiency gap of a 
$\kappa$-equipartition can take. As an illustrative example, if we just partition the polygon $\cP$ 
into $\kappa=2$ regions, then 
$\mathsf{Effgap}_{2}(\cP,\cQ_1,\cQ_2)$
can only be one of the following $3$ possible values: 
%%%%%%%%%%%%%%%%%%%%%%%%%%%%%%%%%%%%%%%%%%%%%%%%%%%%%%%%%%%%%%%%%%%%%%%%%%%%%%%%%%%%%%%%%%%%%%%%%%%%%%%
\[
\left|\,2\,\aalpha(\cP)- \frac{1}{2}\,\eeta(\cP)\,\right|
\,\,\text{ or } \,\,
\left|\,2\,\aalpha(\cP)- \eeta(\cP)\,\right|
\,\,\text{ or } \,\,
\left|\,2\,\aalpha(\cP)- \frac{3}{2}\,\eeta(\cP)\,\right|
\]
%%%%%%%%%%%%%%%%%%%%%%%%%%%%%%%%%%%%%%%%%%%%%%%%%%%%%%%%%%%%%%%%%%%%%%%%%%%%%%%%%%%%%%%%%%%%%%%%%%%%%%%

\begin{lemma}\label{lem-abs}
$\,$
%%%%%%%%%%%%%%%%%%%%%%%%%%%%%%%%%%%%%%%%%%%%%%%%%%%%%%%%%%%%%%%%%%%%%%%%%%%%%%%%%%%%%%%%%%%%%%%%%%%%%%%
\begin{description}
\item[(\emph{a})]
For any 
$\kappa$-equipartition $\cQ_1,\dots,\cQ_\kappa$ of $\cP$, 
$\mathsf{Effgap}_{\kappa}(\cP,\cQ_1,\dots,\cQ_\kappa)$
always assumes one of the $\kappa+1$ values of the form 
$\left|\,2\,\aalpha(\cP)- \Big( z + \frac{\kappa}{2}  \Big) \frac{\eeta(\cP)}{\kappa}\,\right|$
for $z=0,1,\dots,\kappa$. 
%%%%%%%%%%%%%%%%%%%%%%%%%%%%%%%%%%%%%%%%%%%%%%%%%%%%%%%%%%%%%%%%%%%%%%%%%%%%%%%%%%%%%%%%%%%%%%%%%%%%%%%
\item[(\emph{b})]
If 
$
\mathsf{Effgap}_{\kappa}(\cP,\cQ_1,\dots,\cQ_\kappa)
=\left|\,2\,\aalpha(\cP)- \Big( z + \frac{\kappa}{2}  \Big) \frac{\eeta(\cP)}{\kappa}\,\right|$
for some 
$z\in\{0,1,\dots,\kappa\}$ 
and 
some $\kappa$-equipartition $\cQ_1,\dots,\cQ_\kappa$ of $\cP$, 
then 
$
\frac{\eeta(\cP)}{2\,\kappa}z \leq 
\aalpha(\cP)\leq 
\frac{\eeta(\cP)}{2\,\kappa}z
+
\frac{1}{2}\eeta(\cP)$.
%%%%%%%%%%%%%%%%%%%%%%%%%%%%%%%%%%%%%%%%%%%%%%%%%%%%%%%%%%%%%%%%%%%%%%%%%%%%%%%%%%%%%%%%%%%%%%%%%%%%%%%
\end{description}
%%%%%%%%%%%%%%%%%%%%%%%%%%%%%%%%%%%%%%%%%%%%%%%%%%%%%%%%%%%%%%%%%%%%%%%%%%%%%%%%%%%%%%%%%%%%%%%%%%%%%%%
\end{lemma}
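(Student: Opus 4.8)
The plan is to analyze the quantity $\sum_{j=1}^{\kappa}\mathsf{Effgap}(\cQ_j)$ directly from the two-case definition of $\mathsf{Effgap}(\cQ_j)$. For each district $\cQ_j$, write $n=\eeta(\cQ_j)=\eeta(\cP)/\kappa$ (this common value exists because $\cQ_1,\dots,\cQ_\kappa$ is a $\kappa$-equipartition). The key observation is that in \emph{both} cases of the definition, $\mathsf{Effgap}(\cQ_j)$ equals $2\aalpha(\cQ_j)$ minus a multiple of $\eeta(\cQ_j)$: it is $2\aalpha(\cQ_j)-\frac{3}{2}\eeta(\cQ_j)$ when Party~A wins $\cQ_j$ (i.e. $\aalpha(\cQ_j)\geq\frac12\eeta(\cQ_j)$), and $2\aalpha(\cQ_j)-\frac{1}{2}\eeta(\cQ_j)$ otherwise. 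So if I let $w$ denote the number of districts $\cQ_j$ won by Party~A, summing over $j$ gives
\[
\sum_{j=1}^{\kappa}\mathsf{Effgap}(\cQ_j)
= 2\sum_{j=1}^{\kappa}\aalpha(\cQ_j) - \tfrac{3}{2}\,w\,\tfrac{\eeta(\cP)}{\kappa} - \tfrac{1}{2}(\kappa-w)\,\tfrac{\eeta(\cP)}{\kappa}
= 2\,\aalpha(\cP) - \Big(w + \tfrac{\kappa}{2}\Big)\tfrac{\eeta(\cP)}{\kappa},
\]
since $\sum_j\aalpha(\cQ_j)=\aalpha(\cP)$ (the $\cQ_j$ partition the interior of $\cP$) and $\frac{3}{2}w+\frac{1}{2}(\kappa-w)=w+\frac{\kappa}{2}$. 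Taking absolute values and setting $z=w$, which ranges over $\{0,1,\dots,\kappa\}$, yields part~(a).

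For part~(b), I would just unpack what $z=w$ means: $z$ is the number of districts Party~A wins. For each such winning district, $\aalpha(\cQ_j)\geq\frac12\eeta(\cQ_j)=\frac{\eeta(\cP)}{2\kappa}$, and for each of the remaining $\kappa-z$ districts, $\aalpha(\cQ_j)\geq 0$; summing gives the lower bound $\aalpha(\cP)\geq z\cdot\frac{\eeta(\cP)}{2\kappa}$. For the upper bound, in each winning district $\aalpha(\cQ_j)\leq\eeta(\cQ_j)=\frac{\eeta(\cP)}{\kappa}$, and in each losing district $\aalpha(\cQ_j)<\frac12\eeta(\cQ_j)=\frac{\eeta(\cP)}{2\kappa}$; summing gives $\aalpha(\cP)\leq z\cdot\frac{\eeta(\cP)}{\kappa}+(\kappa-z)\cdot\frac{\eeta(\cP)}{2\kappa}=\frac{\eeta(\cP)}{2\kappa}z+\frac{1}{2}\eeta(\cP)$, which is exactly the claimed bound.

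The one subtlety worth flagging is the tie convention: when $\aalpha(\cQ_j)=\bbeta(\cQ_j)=\frac12\eeta(\cQ_j)$, the first case of the definition applies (Party~A is declared the winner), so such a district is correctly counted in $w$ and its efficiency gap is $2\aalpha(\cQ_j)-\frac32\eeta(\cQ_j)=-\frac12\eeta(\cQ_j)$, consistent with the formula. This means the partition of $\{1,\dots,\kappa\}$ into ``$\aalpha(\cQ_j)\geq\frac12\eeta(\cQ_j)$'' versus ``$\aalpha(\cQ_j)<\frac12\eeta(\cQ_j)$'' is exactly the one used in the case analysis, so no district is double-counted or missed. I do not anticipate a genuine obstacle here — the proof is essentially a bookkeeping computation — but the part that needs care is making sure the inequalities in part~(b) use the right strictness (strict for losing districts, non-strict for winning ones) and that the endpoints $z=0$ and $z=\kappa$ are handled by the same argument without special-casing.
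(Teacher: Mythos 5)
Your proof is correct and follows essentially the same route as the paper's: both decompose $\sum_j \mathsf{Effgap}(\cQ_j)$ by counting the number $z$ of districts in the ``$\aalpha(\cQ_j)\geq\frac12\eeta(\cQ_j)$'' case to get part~(a), and both derive part~(b) by bounding $\aalpha(\cQ_j)$ below by $\frac{\eeta(\cP)}{2\kappa}$ on winning districts and above by $\frac{\eeta(\cP)}{\kappa}$ (resp.\ $\frac{\eeta(\cP)}{2\kappa}$) on winning (resp.\ losing) districts. Your explicit handling of the tie convention is a small but welcome addition of care not spelled out in the paper.
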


\begin{proof}
To prove (\emph{a}), 
consider any 
$\kappa$-equipartition $\cQ_1,\dots,\cQ_\kappa$ of $\cP$ with 
$\eeta(\cQ_1)= \dots \eeta(\cQ_\kappa)=\frac{1}{\kappa}\eeta(\cP)$.
Note that for any $\cQ_j$ we have 
$
\mathsf{Effgap}(\cQ_j)=
2\aalpha(\cQ_j) - r_j\eeta(\cQ)
$
where
%%%%%
\[
r_j=\left\{
\begin{array}{r l}
\nicefrac{3}{2}, 
  & \mbox{if $\aalpha(\cQ_j) \geq {\eeta(\cQ_j)}/{(2\,\kappa)}$}
\\
\nicefrac{1}{2}, 
  & \mbox{otherwise}
\end{array}
\right.
\]
%%%%%
Letting $z$ be the number of $r_j$'s that are equal to $3/2$, it follows that 
%%%%%
\begin{multline*}
\mathsf{Effgap}_{\kappa}(\cP,\cQ_1,\dots,\cQ_\kappa)
=
\left| \,
\sum_{j=1}^\kappa \mathsf{Effgap}(\cQ_j)
\,\right|
=
\left|\,2\aalpha(\cP)- \Big( \frac{3}{2}z + \frac{1}{2}(\kappa-z)  \Big) \frac{\eeta(\cP)}{\kappa}\,\right|
\\
=
\left|\,2\,\aalpha(\cP)- \Big( z + \frac{\kappa}{2}  \Big) \frac{\eeta(\cP)}{\kappa}\,\right|
\end{multline*}
%%%%%
To prove (\emph{b}),
note that, since 
$0\leq\aalpha(\cQ_j) \leq \frac{\eeta(\cP)}{\kappa}$ for any $j$,  
we have 
$
\aalpha(\cP)=\sum_{j=1}^\kappa\aalpha(\cQ_j)
\geq
\sum_{j\,:\,r_j=3/2}\frac{\eeta(\cP)}{2\,\kappa}
=
\frac{\eeta(\cP)}{2\,\kappa}z
$
and 
$
\aalpha(\cP)=\sum_{j=1}^\kappa\aalpha(\cQ_j)
<
\sum_{j\,:\,r_j=3/2}\frac{\eeta(\cP)}{\kappa}
+
\sum_{j\,:\,r_j=1/2}\frac{\eeta(\cP)}{2\,\kappa}
=
\frac{\eeta(\cP)}{\kappa}z + 
\frac{\eeta(\cP)}{2\,\kappa} (\kappa-z)
=
\frac{\eeta(\cP)}{2\,\kappa}z
+
\frac{1}{2}\eeta(\cP)
$.
\end{proof}

\begin{corollary}\label{cor-abs}
Using the reverse triangle inequality of norms, 
the absolute difference between two successive values of 
$\mathsf{Effgap}_{\kappa}(\cP,\cQ_1,\dots,\cQ_\kappa)$
is given by 
\begin{multline*}
\left|
\,
\,
\left|
\,
2\, \aalpha(\cP) -  \left(  \frac{z}{\kappa} - \frac{1}{2} \right) \eeta(\cP)
\,
\right|
-
\left|
\,
2\, \aalpha(\cP) - \left(  \frac{z+1}{\kappa} - \frac{1}{2} \right) \eeta(\cP)
\,
\right|
\,
\,
\right|
%%%%%%%%%%%%%%%%%%%%%%%%%%%%%%%%%%%%%%%%%%%%%%%%%%%%%%%%%%%%%%%%%%%%%%%%%%%%%%%%%%%%%%%%%%%%%%%%%%%%%%%
\\
%%%%%%%%%%%%%%%%%%%%%%%%%%%%%%%%%%%%%%%%%%%%%%%%%%%%%%%%%%%%%%%%%%%%%%%%%%%%%%%%%%%%%%%%%%%%%%%%%%%%%%%
\leq 
\left|
\,
\,
\left(
\,
2\, \aalpha(\cP) -  \left(  \frac{z}{\kappa} - \frac{1}{2} \right) \eeta(\cP)
\,
\right)
\,-\,
\left(
\,
2\, \aalpha(\cP) - \left(  \frac{z+1}{\kappa} - \frac{1}{2} \right) \eeta(\cP)
\,
\right)
\,
\,
\right|
=
\frac{\eeta(\cP)}{\kappa} 
\end{multline*}
\end{corollary}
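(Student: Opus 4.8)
The plan is to derive Corollary~\ref{cor-abs} as an immediate consequence of Lemma~\ref{lem-abs}(a) together with the reverse triangle inequality $\bigl|\,|x|-|y|\,\bigr|\le|x-y|$ for real numbers $x,y$. First I would invoke Lemma~\ref{lem-abs}(a): every value that $\mathsf{Effgap}_{\kappa}(\cP,\cQ_1,\dots,\cQ_\kappa)$ can assume is of the form $\bigl|\,2\,\aalpha(\cP)-(z+\tfrac{\kappa}{2})\tfrac{\eeta(\cP)}{\kappa}\,\bigr|$ for an integer $z\in\{0,1,\dots,\kappa\}$, equivalently $\bigl|\,2\,\aalpha(\cP)-(\tfrac{z}{\kappa}+\tfrac12)\eeta(\cP)\,\bigr|$. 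Hence ``two successive values'' means the two expressions obtained from two consecutive choices $z$ and $z+1$ of this discrete index, which is precisely the pair written inside the displayed inequality.

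Next I would set $A=2\,\aalpha(\cP)$ and $B_z=(\tfrac{z}{\kappa}+\tfrac12)\eeta(\cP)$, so that the two successive values are $|A-B_z|$ and $|A-B_{z+1}|$. Applying the reverse triangle inequality with $x=A-B_z$ and $y=A-B_{z+1}$ gives $\bigl|\,|A-B_z|-|A-B_{z+1}|\,\bigr|\le\bigl|(A-B_z)-(A-B_{z+1})\bigr|=|B_{z+1}-B_z|$. Finally, the additive constant $\tfrac12\eeta(\cP)$ cancels in $B_{z+1}-B_z$, leaving $B_{z+1}-B_z=\tfrac{z+1}{\kappa}\eeta(\cP)-\tfrac{z}{\kappa}\eeta(\cP)=\tfrac{\eeta(\cP)}{\kappa}$. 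Chaining these two estimates yields exactly the claimed bound.

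There is essentially no obstacle here, since the whole argument is a one-line instantiation of a standard inequality applied to the finite list of candidate values identified in Lemma~\ref{lem-abs}(a) (achievability of those values, controlled by Lemma~\ref{lem-abs}(b), is irrelevant to this bound). The only points worth a sentence are that (i) ``successive'' should be read as consecutive values of the index $z$ rather than consecutive entries of the sorted list of the $\kappa+1$ candidate values, which is the reading forced by the $\tfrac{z}{\kappa},\tfrac{z+1}{\kappa}$ appearing in the statement; and (ii) the additive half-integer constant written in the corollary has the opposite sign to the one in Lemma~\ref{lem-abs}(a), a harmless bookkeeping discrepancy that leaves $B_{z+1}-B_z$, and hence the bound $\tfrac{\eeta(\cP)}{\kappa}$, unchanged.
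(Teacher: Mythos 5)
Your proposal is correct and is essentially the argument the paper intends: the corollary is stated as a direct application of the reverse triangle inequality $\bigl|\,|x|-|y|\,\bigr|\le|x-y|$ to the two values of Lemma~\ref{lem-abs}(a) indexed by consecutive $z$, with the additive constant cancelling to give $\eeta(\cP)/\kappa$. Your side observations are also apt --- the $-\tfrac12$ versus $+\tfrac12$ in the corollary is indeed a harmless sign slip relative to Lemma~\ref{lem-abs}(a), and ``successive'' must be read as consecutive values of the index $z$.
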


\begin{corollary}[see also~{\cite[p. 853]{sm15}}]
For any 
$\kappa$-equipartition $\cQ_1,\dots,\cQ_\kappa$ of $\cP$, 
consider the following quantities as defined in~{\em\cite{sm15}}: 
%%%%%%%%%%%%%%%%%%%%%%%%%%%%%%%%%%%%%%%%%%%%%%%%%%%%%%%%%%%%%%%%%%%%%%%%%%%%%%%%%%%%%%%%%%%%%%%%%%%%%%%
\begin{quote}
\begin{description}
\item[(Normalized) seat margin of Party~A:]
$
\frac{
\big| \left\{ 
\cQ_j \,:\,
\aalpha(\cQ_j) \geq \frac{1}{2}\eeta(\cQ_j)
\right\} \big|
}
{\kappa}
\,-\,
\frac{1}{2}
$.
%%%%%%%%%%%%%%%%%%%%%%%%%%%%%%%%%%%%%%%%%%%%%%%%%%%%%%%%%%%%%%%%%%%%%%%%%%%%%%%%%%%%%%%%%%%%%%%%%%%%%%%
\item[(Normalized) vote margin of Party~A:]
$
\frac{\aalpha(\cP)}{\eeta(\cP)}
- \frac{1}{2}
$.
\end{description}
\end{quote}
%%%%%%%%%%%%%%%%%%%%%%%%%%%%%%%%%%%%%%%%%%%%%%%%%%%%%%%%%%%%%%%%%%%%%%%%%%%%%%%%%%%%%%%%%%%%%%%%%%%%%%%
Then, we can write 
$\frac
 { 2\,\aalpha(\cP)- \Big( z + \frac{\kappa}{2}  \Big) \frac{\eeta(\cP)}{\kappa} }
 {\eeta(\cP)} 
$
as 
%%%%%%%%%%%%%%%%%%%%%%%%%%%%%%%%%%%%%%%%%%%%%%%%%%%%%%%%%%%%%%%%%%%%%%%%%%%%%%%%%%%%%%%%%%%%%%%%%%%%%%%
$
2 \left( \frac{\aalpha(\cP)}{\eeta(\cP)} - \frac{1}{2} \right)
-
\left( \frac{z}{\kappa}-\frac{1}{2} \right)
$,
%%%%%%%%%%%%%%%%%%%%%%%%%%%%%%%%%%%%%%%%%%%%%%%%%%%%%%%%%%%%%%%%%%%%%%%%%%%%%%%%%%%%%%%%%%%%%%%%%%%%%%%
and identifying $z$ with the quantity  
$\big| \left\{ \cQ_j \,:\, \aalpha(\cQ_j) \geq \frac{1}{2}\eeta(\cQ_j) \right\} \big|$
we get 
%%%%%%%%%%%%%%%%%%%%%%%%%%%%%%%%%%%%%%%%%%%%%%%%%%%%%%%%%%%%%%%%%%%%%%%%%%%%%%%%%%%%%%%%%%%%%%%%%%%%%%%
\[
\frac{\mathsf{Effgap}_{\kappa}(\cP,\cQ_1,\dots,\cQ_\kappa)}{\eeta(\cP)}
=
\big| \,
2 \times ( \text{vote margin of Party~A} )
- 
( \text{seat margin of Party~A} )
\,\big|
\]
%%%%%%%%%%%%%%%%%%%%%%%%%%%%%%%%%%%%%%%%%%%%%%%%%%%%%%%%%%%%%%%%%%%%%%%%%%%%%%%%%%%%%%%%%%%%%%%%%%%%%%%
\end{corollary}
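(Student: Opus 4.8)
The plan is to obtain this identity as essentially a one-line algebraic consequence of Lemma~\ref{lem-abs}(a); no new combinatorial argument is required. First I would invoke Lemma~\ref{lem-abs}(a) for the given $\kappa$-equipartition $\cQ_1,\dots,\cQ_\kappa$: we have $\eeta(\cQ_j)=\eeta(\cP)/\kappa$ for every $j$ and $\mathsf{Effgap}_{\kappa}(\cP,\cQ_1,\dots,\cQ_\kappa)=\left|\,2\,\aalpha(\cP)-\big(z+\tfrac{\kappa}{2}\big)\tfrac{\eeta(\cP)}{\kappa}\,\right|$, where $z$ counts the districts $\cQ_j$ whose efficiency gap is computed by the first case of the defining formula. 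By the very definition of $\mathsf{Effgap}(\cQ_j)$, that first case is used precisely when $\aalpha(\cQ_j)\geq\tfrac12\eeta(\cQ_j)$ (with ties assigned to Party~A), so $z=\big|\{\cQ_j:\aalpha(\cQ_j)\geq\tfrac12\eeta(\cQ_j)\}\big|$ is exactly the (un-normalized) number of seats won by Party~A. This is the only point that needs a word of justification.

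Next I would carry out the arithmetic. Dividing both sides of the displayed formula by $\eeta(\cP)>0$ and using $|X|/\eeta(\cP)=|X/\eeta(\cP)|$ gives $\mathsf{Effgap}_{\kappa}(\cP,\cQ_1,\dots,\cQ_\kappa)/\eeta(\cP)=\big|\,\tfrac{2\aalpha(\cP)}{\eeta(\cP)}-\tfrac{z}{\kappa}-\tfrac12\,\big|$. Inside the bars I would add and subtract $1$ and regroup, obtaining $\tfrac{2\aalpha(\cP)}{\eeta(\cP)}-\tfrac{z}{\kappa}-\tfrac12=2\big(\tfrac{\aalpha(\cP)}{\eeta(\cP)}-\tfrac12\big)-\big(\tfrac{z}{\kappa}-\tfrac12\big)$, which is exactly the rewriting asserted in the statement; here $\tfrac{\aalpha(\cP)}{\eeta(\cP)}-\tfrac12$ is the normalized vote margin of Party~A, and, after the identification of $z$ above, $\tfrac{z}{\kappa}-\tfrac12$ is the normalized seat margin of Party~A.

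Combining the two steps yields $\mathsf{Effgap}_{\kappa}(\cP,\cQ_1,\dots,\cQ_\kappa)/\eeta(\cP)=\big|\,2\times(\text{vote margin of Party~A})-(\text{seat margin of Party~A})\,\big|$, as claimed. I expect no genuine obstacle here: all of the substance is already contained in Lemma~\ref{lem-abs}(a), and what remains is elementary manipulation. The one spot deserving mild care is checking that the integer $z$ produced by Lemma~\ref{lem-abs}(a) really equals Party~A's seat count, i.e.\ that the ``first case'' of the $\mathsf{Effgap}$ definition is precisely the case in which Party~A wins the district (including the tie-breaking convention in favor of Party~A).
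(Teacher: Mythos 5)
Your proposal is correct and follows essentially the same route as the paper, which likewise derives the identity by dividing the expression from Lemma~\ref{lem-abs}(a) by $\eeta(\cP)$, regrouping into $2\big(\tfrac{\aalpha(\cP)}{\eeta(\cP)}-\tfrac12\big)-\big(\tfrac{z}{\kappa}-\tfrac12\big)$, and identifying $z$ with Party~A's seat count. Your extra remark that $z$ (the number of districts falling into the first case of the $\mathsf{Effgap}$ definition) coincides with $\big|\{\cQ_j:\aalpha(\cQ_j)\geq\tfrac12\eeta(\cQ_j)\}\big|$ under the tie-breaking convention is exactly the identification the paper makes implicitly.
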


%%%%%%%%%%%%%%%%%%%%%%%%%%%%%%%%%%%%%%%%%%%%%%%%%%%%%%%%%%%%%%%%%%%%%%%%%%%%%%%%%%%%%%%%%%%%%%%%%%%%%%%
%%%%%%%%%%%%%%%%%%%%%%%%%%%%%%%%%%%%%%%%%%%%%%%%%%%%%%%%%%%%%%%%%%%%%%%%%%%%%%%%%%%%%%%%%%%%%%%%%%%%%%%
%%%%%%%%%%%%%%%%%%%%%%%%%%%%%%%%%%%%%%%%%%%%%%%%%%%%%%%%%%%%%%%%%%%%%%%%%%%%%%%%%%%%%%%%%%%%%%%%%%%%%%%

\section{Approximation Hardness Result for \mwvp$_{\kappa}$}
\label{sec-complexity}

Recall that, for any $\rho\geq 1$, an approximation algorithm with an approximation ratio of $\rho$ (or, 
simply an $\rho$-approximation) is a polynomial-time solution of value at most $\rho$ times the value of 
an optimal solution~\cite{GJ79}.

\begin{theorem}\label{thm-hardness}
Assuming $\mathrm{P}\neq\NP$, 
for any rational constant $\eps\in(0,1)$, 
the \mwvp$_{\kappa}$ problem for a rectilinear polygon $\cP$ 
does not admit a $\rho$-approximation algorithm 
for any $\rho$ and all $2\leq\kappa\leq\eps |\cP|$.
\end{theorem}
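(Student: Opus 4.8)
The plan is to reduce from a known $\NP$-hard problem whose yes-instances would force the optimal efficiency gap to be exactly $0$ and whose no-instances would force it to be strictly positive. If we can manufacture such a gap, then any $\rho$-approximation algorithm — regardless of how large $\rho$ is — would have to output a solution of value $0$ on yes-instances (since $\rho\cdot 0 = 0$) and a positive value on no-instances, thereby deciding the $\NP$-hard problem in polynomial time. This ``gap at zero'' trick is the standard route to ruling out \emph{all} multiplicative approximation ratios simultaneously, and it is exactly the right shape for a measure like $\mathsf{Effgap}_\kappa$ which, by Lemma~\ref{lem-abs}, takes only finitely many discrete values; in particular the minimum is either $0$ or bounded away from $0$ by a quantity controlled by $\eeta(\cP)$ and $\kappa$.

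First I would recall from the definition that $\mathsf{Effgap}(\cQ)=0$ iff $\aalpha(\cQ)=\eeta(\cQ)/4$ or $\bbeta(\cQ)=\eeta(\cQ)/4$, i.e.\ iff one party wins exactly a $3{:}1$ share in that district. So the cleanest target is to force every district $\cQ_j$ in the optimal $\kappa$-equipartition to have $\aalpha(\cQ_j)=\eeta(\cQ_j)/4$ (all wasted-vote terms individually zero), which certainly makes the sum zero. Then I would choose the source problem to be a partition-type problem — the natural candidates are a variant of \textsc{3-Partition}, \textsc{Numerical Matching}, or an exact-cover/subset-sum style problem such as \textsc{X3c} (both are referenced in the macro preamble, suggesting the authors use one of them) — where a solution corresponds to splitting a population of cells into $\kappa$ groups each realizing the exact ratio $\aalpha=\eeta/4$. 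Each ``item'' of the source instance becomes a cell (or a small block of cells) with prescribed $\aalpha_{i,j},\bbeta_{i,j}$; the target sum per group is encoded into the common population $\eeta(\cP)/\kappa$ of the equipartition.

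The main obstacle — and where most of the technical work lives — is the \emph{geometry}: the district pieces $\cQ_1,\dots,\cQ_\kappa$ must be \emph{connected rectilinear subpolygons} tiling the interior of $\cP$, not arbitrary subsets of cells. A generic partition of the multiset of cell-values into $\kappa$ groups need not be realizable by connected regions, so the reduction must lay out the cells on the grid so that (i) any valid $\kappa$-equipartition is \emph{forced} to respect the intended grouping, and (ii) conversely, every intended grouping can be drawn as connected rectilinear polygons. The usual device is to build a long ``backbone'' of heavily-populated filler cells that any equipartition must cut into $\kappa$ connected segments at essentially fixed places (because the population budget per district is tight), and to hang the actual item-cells off this backbone so that an item can only be attached to one particular segment; filler cells would be given balanced counts (e.g.\ $\aalpha=\bbeta$, or counts that contribute a fixed, ratio-neutral amount) so they do not disturb the $\aalpha=\eeta/4$ bookkeeping. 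I would also need to check that the constraint $2\le\kappa\le\eps|\cP|$ is met: since each item contributes $\Theta(1)$ cells and the backbone/padding contributes $\Theta(|\mathrm{instance}|/\eps)$ filler cells, $\kappa$ (= number of groups) is at most an $\eps$-fraction of $|\cP|$, which is exactly the regime the theorem claims, and padding lets us hit any target $\kappa$ in that range.

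Finally I would verify the gap: on a yes-instance the constructed $\cP$ has a $\kappa$-equipartition with $\mathsf{Effgap}_\kappa=0$, so $\opt_\kappa(\cP)=0$; on a no-instance, Lemma~\ref{lem-abs}(a) says every equipartition value has the form $\bigl|\,2\aalpha(\cP)-(z+\tfrac{\kappa}{2})\tfrac{\eeta(\cP)}{\kappa}\,\bigr|$ for integer $z$, and by choosing the global totals $\aalpha(\cP),\eeta(\cP)$ so that $2\aalpha(\cP)$ is \emph{not} an integer multiple of $\tfrac{\eeta(\cP)}{2\kappa}$ away from $\tfrac{\kappa}{2}\cdot\tfrac{\eeta(\cP)}{\kappa}$ unless the partition is ``perfect,'' we get $\opt_\kappa(\cP)>0$ whenever no perfect partition exists. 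A $\rho$-approximation would then distinguish the two cases, contradicting $\mathrm{P}\neq\NP$. The one routine but necessary check is that all constructed integers ($\eeta_{i,j},\aalpha_{i,j},\bbeta_{i,j}$, the grid dimensions $m,n$) are polynomial in the size of the source instance, so the reduction runs in polynomial time.
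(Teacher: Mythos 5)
Your proposal follows essentially the same route as the paper: a gap-at-zero reduction from the $\NP$-hard PARTITION problem, with filler cells of zero individual efficiency gap (achieved via a $3{:}1$ vote ratio --- note that $\aalpha=\bbeta$ fillers would give gap $-\eeta/2$, so only your hedged ``ratio-neutral'' alternative works) that are forced to be singleton districts so that any $\kappa\in[2,\eps|\cP|]$ is realizable, and a yes/no gap of $0$ versus a positive value that defeats every approximation ratio. The paper's concrete gadget is a $3\times(n+1)$ rectangle with two anchor cells of population $\Delta/2$ (one all Party~A, one all Party~B) and item cells of population $a_j$ split evenly, which instantiates exactly the backbone-plus-items layout you describe.
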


\begin{remark}\label{rr1}
Since the PARTITION problem is not a strongly $\NP$-complete problem 
\emph{(\IE}, admits a pseudo-polynomial time solution\emph{)}, 
the approximation-hardness result in Theorem~\ref{thm-hardness} does not hold 
if the total population $\eeta(\cP)$ is polynomial in $|\cP|$ 
\emph{(\IE}, if $\eeta(\cP)=O(|\cP|^c)$ for some positive constant $c$\emph{)}.
Indeed, if 
$\eeta(\cP)$ is polynomial in $|\cP|$ then it is easy to design a polynomial-time exact 
solution via dynamic programming for those instances of 
\mwvp$_{\kappa}$ problem that appear in the proof of 
Theorem~\ref{thm-hardness}.
\end{remark}

\noindent
{\bf Proof of Theorem~\ref{thm-hardness}}. 
We reduce from the $\NP$-complete 
PARTITION 
problem~\cite{GJ79} which is defined as follows: 
{\em given a set of $n$ positive integers 
$\cA=\left\{ a_0,\dots,a_{n-1}\right\}$, decide if there exists a subset $\cA'\subset\cA$ 
such that 
$
\sum_{a_i\in\cA'} a_i
=
\sum_{a_j\notin\cA'} a_j 
$}.
Note that we can assume without loss of generality that $n$ is sufficiently large and each of 
$a_0,\dots,a_{n-1}$ is a multiple of \emph{any} fixed positive integer. 
For notational convenience, let $\Delta=\sum_{j=0}^{n-1} a_j$.

%%%%%%%%%%%%%%%%%%%%%%%%%%%%%%%%%%%%%%%%%%%%%%%%%%%%%%%%%%%%%%%%%%%%%%%%%%%%%%%%%%%%%%%%%%%%%%%%%%%%%%%
%%%%%%%%%%%%%%%%%%%%%%%%%%%%%%%%%%%%%%%%%%%%%%%%%%%%%%%%%%%%%%%%%%%%%%%%%%%%%%%%%%%%%%%%%%%%%%%%%%%%%%%
%%%%%%%%%%%%%%%%%%%%%%%%%%%%%%%%%%%%%%%%%%%%%%%%%%%%%%%%%%%%%%%%%%%%%%%%%%%%%%%%%%%%%%%%%%%%%%%%%%%%%%%
\begin{figure}[htbp]
\vspace*{-0.9in}
\centerline{\includegraphics[scale=1]{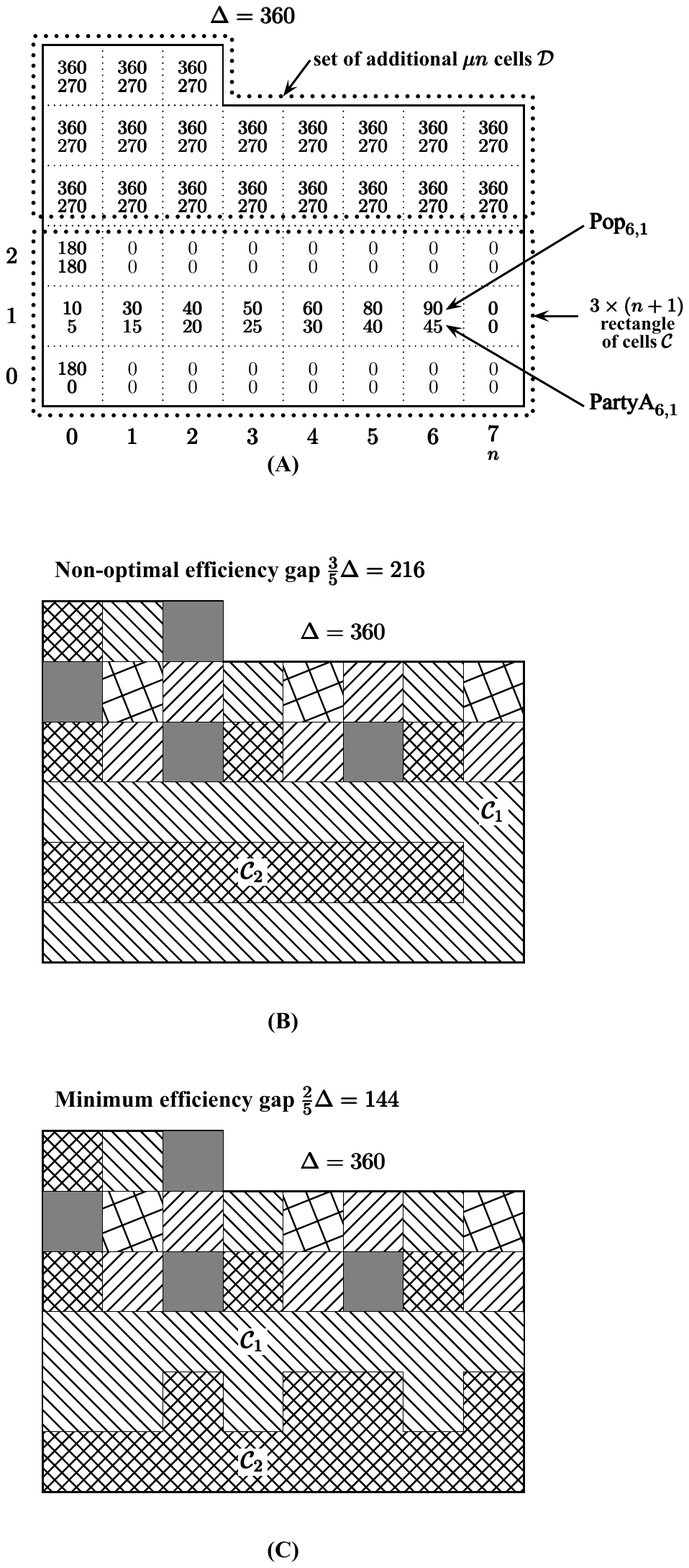}}
\caption{\label{ex2-fig}An illustration of the construction in the proof of Theorem~\ref{thm-hardness}
when the instance of the PARTITION problem is 
$\cA=\{10,30,40,50,60,80,90\}$.
}
\end{figure}
%%%%%%%%%%%%%%%%%%%%%%%%%%%%%%%%%%%%%%%%%%%%%%%%%%%%%%%%%%%%%%%%%%%%%%%%%%%%%%%%%%%%%%%%%%%%%%%%%%%%%%%
%%%%%%%%%%%%%%%%%%%%%%%%%%%%%%%%%%%%%%%%%%%%%%%%%%%%%%%%%%%%%%%%%%%%%%%%%%%%%%%%%%%%%%%%%%%%%%%%%%%%%%%

Let $\mu\geq 0$ be such that $\kappa=2+\mu n$ (we will later show that $\mu$ is at most the constant $\frac{6\eps}{1-\eps}$). 
Our rectilinear polygon $\cP$, as illustrated in~\FI{ex2-fig}(A), consists of a rectangle 
$\cC=\left\{ p_{i,j} \,|\, 0\leq i\leq n, \, 0\leq j\leq 2 \right\}$
of size $3\times (n+1)$ with 
additional $\mu n$ cells attached to it in any arbitrary manner to make the whole figure a connected polygon without holes.
For convenience, let 
$\cD=\left\{ p_{i,j} \,|\, p_{i,j}\notin \cC \right\}$
be the set of the additional $\mu n$ cells. 
The relevant numbers for each cell are as follows: 
%%%%%%%%%%%%%%%%%%%%%%%%%%%%%%%%%%%%%%%%%%%%%%%%%%%%%%%%%%%%%%%%%%%%%%%%%%%%%%%%%%%%%%%%%%%%%%%%%%%%%%%
\begin{gather*}
\eeta_{i,j}=
\left\{
\begin{array}{r l}
\nicefrac{\Delta}{2}, & \mbox{if $i=j=0$ or if $i=0,j=2$} 
\\
a_j, & \mbox{if $i=1$ and $j<n$}
\\
\Delta, & \mbox{if $p_{i,j}\in\cD$}
\\
0, & \mbox{otherwise}
\end{array}
\right.
%%%%%%%%%%%%%%%%%%%%%%%%%%%%%%%%%%%%%%%%%%%%%%%%%%%%%%%%%%%%%%%%%%%%%%%%%%%%%%%%%%%%%%%%%%%%%%%%%%%%%%%
\,\,\,\,\,
\,\,\,\,\,
%%%%%%%%%%%%%%%%%%%%%%%%%%%%%%%%%%%%%%%%%%%%%%%%%%%%%%%%%%%%%%%%%%%%%%%%%%%%%%%%%%%%%%%%%%%%%%%%%%%%%%%
\aalpha_{i,j}=
\left\{
\begin{array}{r l}
\nicefrac{\Delta}{2}, & \mbox{if $i=j=0$}
\\
\nicefrac{a_j}{2}, & \mbox{if $i=1$ and $j<n$}
\\
\nicefrac{3\Delta}{4}, & \mbox{if $p_{i,j}\in\cD$}
\\
0, & \mbox{otherwise}
\end{array}
\right.
%%%%%%%%%%%%%%%%%%%%%%%%%%%%%%%%%%%%%%%%%%%%%%%%%%%%%%%%%%%%%%%%%%%%%%%%%%%%%%%%%%%%%%%%%%%%%%%%%%%%%%%
\end{gather*}
%%%%%%%%%%%%%%%%%%%%%%%%%%%%%%%%%%%%%%%%%%%%%%%%%%%%%%%%%%%%%%%%%%%%%%%%%%%%%%%%%%%%%%%%%%%%%%%%%%%%%%%
First, we show how to select a rational constant $\mu$ such that any integer $\kappa$ in the range 
$[\,2,\eps |\cP|\,]$ can be realized.
Assume that 
$\kappa=\eps' |\cP|\in [\,2,\eps |\cP|\,]$ for some $\eps'$. 
Since 
$|\cP|=3(n+1)+\mu n$
the following calculations hold:
%%%%%%%%%%%%%%%%%%%%%%%%%%%%%%%%%%%%%%%%%%%%%%%%%%%%%%%%%%%%%%%%%%%%%%%%%%%%%%%%%%%%%%%%%%%%%%%%%%%%%%%
\begin{gather*}
\kappa=2+\mu n
=
\eps' |\cP|
=
\eps' (3(n+1)+\mu n)
\,\equiv\,
\mu = \frac{3\eps'(n+1)-2}{(1-\eps')n}
<
\frac{4\eps'}{1-\eps'}
<
\frac{4\eps}{1-\eps}
\end{gather*}
%%%%%%%%%%%%%%%%%%%%%%%%%%%%%%%%%%%%%%%%%%%%%%%%%%%%%%%%%%%%%%%%%%%%%%%%%%%%%%%%%%%%%%%%%%%%%%%%%%%%%%%

\noindent
\textbf{Claim~\ref{thm-hardness}.1}
$\mathsf{Effgap}(p_{i,j})=0$ 
for each $p_{i,j}\in\cD$, 
and moreover each $p_{i,j}\in\cD$ must be a separate partition by itself in any 
$\kappa$-equipartition of $\cP$. 

\smallskip

\begin{proof}
By straightforward calculation, 
$\mathsf{Effgap}(p_{i,j})=2\times \frac{3\Delta}{4}-\frac{3\Delta}{2}=0$.  
Since $\kappa=2+\mu n$ and 
$\eeta(\cP)=\sum_{p_{i,j}\in\cP} \eeta_{i,j}=\Delta + \Delta + \mu n \Delta = (2+\mu n)\Delta$, 
each partition in any 
$\kappa$-equipartition of $\cP$ must have a population of 
$\frac{\eeta(\cP)}{\kappa}=\Delta$ and thus 
each $p_{i,j}\in\cD$ of population $\Delta$ must be a separate partition by itself.
\end{proof}

Using Claim~\ref{thm-hardness}.1 we can simply ignore all $p_{i,j}\in\cD$ in the calculation of 
of efficiency gap of a valid solution of $\cP$ and it follows that 
the total efficiency gap of a $\kappa$-equipartition of $\cP$ 
is identical to that of a $2$-equipartition of $\cC$.
A proof of the theorem then follows provided we prove the following two claims.

%%%%%%%%%%%%%%%%%%%%%%%%%%%%%%%%%%%%%%%%%%%%%%%%%%%%%%%%%%%%%%%%%%%%%%%%%%%%%%%%%%%%%%%%%%%%%%%%%%%%%%%
\medskip
\begin{adjustwidth}{0.6cm}{}
\begin{description}
\item[\hspace*{0.2in}(soundness)]
If the PARTITION problem does not have a solution 
then 
$\opt_2(\cC)=\Delta$.
\item[(completeness)]
If the PARTITION problem has a solution 
then 
$\opt_2(\cC)=0$.
\end{description}
\end{adjustwidth}
%%%%%%%%%%%%%%%%%%%%%%%%%%%%%%%%%%%%%%%%%%%%%%%%%%%%%%%%%%%%%%%%%%%%%%%%%%%%%%%%%%%%%%%%%%%%%%%%%%%%%%%

\medskip
\noindent
\textbf{Proof of soundness} 
(refer to \FI{ex2-fig}(B))

\smallskip
Suppose that there exists a valid solution (\IE, a $2$-equipartition) 
$\cC_1,\cC_2$ of \mwvp$_2$ for $\cC$ with $p_{0,0}\in\cC_1,p_{0,2}\in\cC_2$, and let 
$\cA' =\left\{
a_j \,|\, p_{1,j}\in\cC_1 
\right\}
$.
Then,  
%%%
\[
\Delta 
= 
\frac{\eeta(\cC)}{2}
=
\eeta_{0,0} + \sum_{p_{1,j}\in\cC_1} \eeta_{1,j} 
=
\frac{\Delta}{2} + \sum_{a_j\in\cA'} a_j 
\,\equiv\,
\sum_{a_j\in\cA'} a_j = \frac{\Delta}{2} 
\]
%%%
and thus 
$\cA'$ is a valid solution of  
PARTITION, a contradiction!

Thus, assume that both 
$p_{0,0}$ and $p_{0,2}$
belong to the same partition, say $\cC_1$. Then, since 
$\eeta_{0,0}+\eeta_{0,4}=\Delta=\frac{\eeta(\cC)}{2}$, 
every $p_{1,j}$ must belong to $\cC_2$.
Moreover, every $p_{i,j}\in\cC$ with $\eta_{i,j}=0$ must belong to $\cC_1$ since 
otherwise $\cC_1$ will \emph{not} be a connected region. 
This provides 
$\eeta(\cC_1)=\eeta(\cC_2)=\Delta$, showing that  
$\cC_1,\cC_2$
is indeed a valid solution (\IE, a $2$-equipartition) of \mwvp$_2$ for $\cC$.
The total efficiency gap of this solution can be calculated as
\begin{multline*}
\mathsf{Effgap}_2(\cC,\cC_1,\cC_2)=
|\, \mathsf{Effgap}(\cC_1)+ \mathsf{Effgap}(\cC_2) \,|
\\
=
\left|\, 2 \,\aalpha(\cC_1) - \frac{3}{2}\eeta(\cC_1)  + 2 \,\aalpha(\cC_1) - \frac{3}{2}\eeta(\cC_1) \,\right|
=
\left|\, 2 \frac{\Delta}{2} - \frac{3}{2}\Delta    + 2 \frac{\Delta}{2} - \frac{3}{2}\Delta \,\right|
=
\Delta
\end{multline*}
%

%%%%%%%%%%%%%%%%%%%%%%%%%%%%%%%%%%%%%%%%%%%%%%%%%%%%%%%%%%%%%%%%%%%%%%%%%%%%%%%%%%%%%%%%%%%%%%%%%%%%%%%
\medskip
\noindent
\textbf{Proof of completeness} 
(refer to \FI{ex2-fig}(C))

\smallskip

Suppose 
that 
there is a valid solution of $\cA'\subset\cA$ of PARTITION and consider  
the two polygons 
%%%%%%%%%%%%%%%%%%%%%%%%%%%%%%%%%%%%%%%%%%%%%%%%%%%%%%%%%%%%%%%%%%%%%%%%%%%%%%%%%%%%%%%%%%%%%%%%%%%%%%%
\[
\cC_1=
\left\{ p_{2,j} \,|\, 0\leq j \leq n \right\}
\cup
\left\{ p_{1,j} \,|\, a_j\in\cA' \right\}, 
\,\,
\,\,
\,\,
\cC_2 = \cC\setminus\cC_1
\]
%%%%%%%%%%%%%%%%%%%%%%%%%%%%%%%%%%%%%%%%%%%%%%%%%%%%%%%%%%%%%%%%%%%%%%%%%%%%%%%%%%%%%%%%%%%%%%%%%%%%%%%
By straightforward calculation, it is easy to verify the following:
%%%%%%%%%%%%%%%%%%%%%%%%%%%%%%%%%%%%%%%%%%%%%%%%%%%%%%%%%%%%%%%%%%%%%%%%%%%%%%%%%%%%%%%%%%%%%%%%%%%%%%%
\begin{itemize}
\item 
$\eeta(\cC_1)=\sum_{a_j\in \cA'} a_j + \sum_{j=0}^n \eeta_{2,j}=\Delta$, 
$\eeta(\cC_2)=\sum_{a_j\notin \cA'} a_j + \sum_{j=0}^n \eeta_{2,j}=\Delta$, 
and thus $\cC_1,\cC_2$
is a valid solution (\IE, a $2$-equipartition) of \mwvp$_2$ for $\cC$.
%%%%%%%%%%%%%%%%%%%%%%%%%%%%%%%%%%%%%%%%%%%%%%%%%%%%%%%%%%%%%%%%%%%%%%%%%%%%%%%%%%%%%%%%%%%%%%%%%%%%%%%
\item 
$\mathsf{Effgap}_2(\cC,\cC_1,\cC_2)=0\opt_2(\cC)=0$ since 
\begin{multline*}
\mathsf{Effgap}_2(\cC,\cC_1,\cC_2)=
|\, \mathsf{Effgap}(\cC_1)+ \mathsf{Effgap}(\cC_2) \,|
\\
=
\left|\, 2 \,\aalpha(\cC_1) - \frac{3}{2}\eeta(\cC_1)  + 2 \,\aalpha(\cC_1) - \frac{1}{2}\eeta(\cC_1) \,\right|
\\
=
\left|\, 2 \,\left( \frac{\Delta}{2}+\frac{\Delta}{4} \right) 
    - \frac{3}{2}\Delta    + 2 \frac{\Delta}{4} - \frac{1}{2}\Delta \,\right|
=0
\end{multline*}
\end{itemize}
%%%%%%%%%%%%%%%%%%%%%%%%%%%%%%%%%%%%%%%%%%%%%%%%%%%%%%%%%%%%%%%%%%%%%%%%%%%%%%%%%%%%%%%%%%%%%%%%%%%%%%%
\hfill{\Pisymbol{pzd}{113}
\vspace{0.1in}

\section{Efficient Algorithms Under Realistic Assumptions}
\label{sec-approx}

Although Theorem~\ref{thm-hardness} seems to render the problem
\mwvp$_\kappa$ intractable in theory, our empirical results
show that the problem is computationally tractable in practice. 
This is because in real-life applications, many constraints in the theoretical 
formulation of \mwvp$_\kappa$ are often relaxed. For example: 
%%%%%%%%%%%%%%%%%%%%%%%%%%%%%%%%%%%%%%%%%%%%%%%%%%%%%%%%%%%%%%%%%%%%%%%%%%%%%%%%%%%%%%%%%%%%%%%%%%%%%%%
\begin{description}[leftmargin=10pt]
\item[(\emph{i}) Restricting district shapes:]
Individual partitions of the $\kappa$-equipartition of $\cP$ may be restricted in shape.
For example, 
$37$ states require their legislative districts to be reasonably compact and 
$18$ states require congressional districts to be compact~\cite{www}.
%%%%%%%%%%%%%%%%%%%%%%%%%%%%%%%%%%%%%%%%%%%%%%%%%%%%%%%%%%%%%%%%%%%%%%%%%%%%%%%%%%%%%%%%%%%%%%%%%%%%%%%
\item[(\emph{ii}) Variations in district populations:]
A partition $\cQ_1,\dots,\cQ_\kappa$ of $\cP$ is only \emph{approximately} $\kappa$-equipartition, \IE,
$\eeta(\cQ_1),\dots,\eeta(\cQ_\kappa)$ are approximately, but not exactly, equal to 
$\nicefrac{\eeta(\cP)}{\kappa}$.
For example, 
the usual federal standards 
require equal population \emph{as nearly as is practicable}
for congressional districts 
but allow more relaxed \emph{substantially equal} population 
(\EG, no more than $10\%$ deviation between the largest and smallest district)
for state and local legislative districts~\cite{www}. 
%%%%%%%%%%%%%%%%%%%%%%%%%%%%%%%%%%%%%%%%%%%%%%%%%%%%%%%%%%%%%%%%%%%%%%%%%%%%%%%%%%%%%%%%%%%%%%%%%%%%%%%
\item[(\emph{iii}) Bounding the efficiency gap measure away from zero:]
A $\kappa$-equipartition $\cQ_1,\dots,\cQ_\kappa$ of $\cP$ is a valid solution only if 
$\mathsf{Effgap}_{\kappa}(\cP,\cQ_1,\dots,\cQ_\kappa)\geq \eps\eeta(\cP)$ for some $0<\eps<1$.
Indeed, the authors that originally proposed the efficiency gap measure provided in~\cite[pp. 886-887]{sm15} 
several reasons for not requiring 
$\mathsf{Effgap}_{\kappa}(\cP,\cQ_1,\dots,\cQ_\kappa)/\eeta(\cP)$
to be either zero or too close to zero.
\end{description}
%%%%%%%%%%%%%%%%%%%%%%%%%%%%%%%%%%%%%%%%%%%%%%%%%%%%%%%%%%%%%%%%%%%%%%%%%%%%%%%%%%%%%%%%%%%%%%%%%%%%%%%
In this section, we explore algorithmic implications of these types of relaxations of constraints for 
\mwvp$_\kappa$.

\subsection{The Case of Two Stable and Approximately Equal Partitions}

This case considers constraints (\emph{ii}) and (\emph{iii}).
The following definition of ``near partitions''
formalizes the concept of variations in district populations.

\begin{definition}[Near partitions]
Let $\kappa\in \mathbb{N}$, and let $\cP$ be an instance of \mwvp$_\kappa$.
Let $\phi\eqdef\cQ_i,\ldots,\cQ_\kappa$ be such that $\cQ_1,\ldots,\cQ_\kappa$ is a partition of $\cP$, 
such that for each $i\in \{1,\ldots,\kappa\}$, we have 
\[
\left(\frac{1}{\kappa}-\delta\right) \cdot \eeta(\cP) \leq \eeta(\cQ_i) \leq \left(\frac{1}{\kappa}+\delta\right) \cdot \eeta(\cP),
\]
for some $\delta\geq 0$.
Then we say that $\phi$ is a \emph{$\delta$-near partition}.
\end{definition}

The next definition of ``stability''
formalizes the concept of bounding away from zero the efficiency gap of each partition.

\begin{definition}[Stability]
Let $\kappa \in \mathbb{N}$, let $\cP$ be an instance of \mwvp$_\kappa$, and let 
$\phi\eqdef\cQ_1,\ldots,\cQ_\kappa$ be a partition for $\cP$.
We say that $\phi$ is \emph{$\gamma$-stable}, for some $\gamma>0$, if for all $i\in \{1,\ldots,\kappa\}$, we have 
\[
\mathsf{Effgap}(\cQ_i) > \gamma \cdot \eeta(\cQ_i).
\]
\end{definition}

\begin{figure}
 \begin{center}
  \scalebox{0.8}{\includegraphics{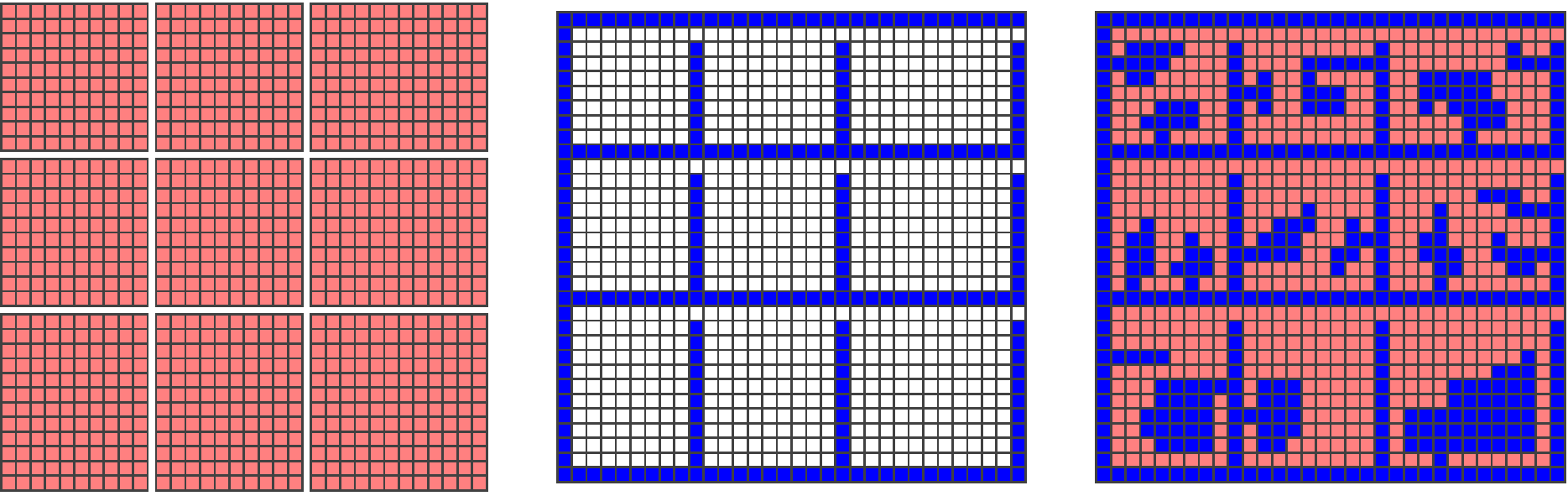}}
  \caption{The $10$-basic rectangles of a $32\times 32$ grid (left), the $10$-basic tree (middle), and a $10$-canonical solution (right).\label{fig:canonical}}
 \end{center}
\end{figure}

\begin{definition}[Canonical solution]
Let $\cP$ be an instance of \mwvp$_2$.
Let $t\in \mathbb{N}$.
Let $\cB$ be the partition of $\cP$ obtained as follows:
We partition $\cP$ into $\lfloor m/t\rfloor \cdot \rfloor n/t\rfloor$ rectangles, where each rectangle consists of the intersection of $t$ rows with $t$ columns, except possibly for the rectangles that are incident to the right-most and the bottom-most boundaries of $\cP$.
We refer to the rectangles in $\cB$ as \emph{$t$-basic} (see figure \ref{fig:canonical}).
Let $T$ be the set of cells consisting of the union of the left-most column of $\cP$, the top row of $\cP$, and for each $t$-basic rectangle $X$, the bottom row of $X$, and the right-most column of $X$, except for the cell that is next to the top cell of that column (see figure \ref{fig:canonical}).
We refer to $T$ as the \emph{$t$-basic tree}
For each $t$-basic rectangle $X$, we define its \emph{interior} to be the set of cells in $X$ that are at distance at least $2$ from $T$.
A solution $\phi\eqdef\cZ_1,\cZ_2$ of $\cP$ is called \emph{$t$-canonical} if it 
satisfies the following properties.
\begin{description}
\item{(1)}
$T\subseteq Z_1$.

\item{(2)}
Let $X$ be a $t$-basic rectangle, and let $X'$ be its interior.
For each $i\in \{1,2\}$, let $A_i$ be the set of connected components of $X'\cap \cZ_i$.
Then, for each $Y\in A_1$, there exists a unique cell in $T_1$ that is adjacent to both $Y$ and $T$.
Moreover, all other cells in $X\setminus (X'\cup T)$ are in $\cZ_2$ (see Figure \ref{fig:canonical}).
\end{description}
\end{definition}

\begin{lemma}\label{lem:canonical_existence}
Let $\cP$ be an instance of \mwvp$_2$.
Suppose that there are no empty cells and the maximum number of people in any cell of $\cP$ is $C$.
Let $\gamma>0$, and 
suppose that there exists a $\gamma$-stable solution $\phi\eqdef\cQ_1,\cQ_2$ for $P$.
Then, for any $\eps>0$, at least one of the following conditions hold:
\begin{description}
\item{(1)}
Either $\cQ_1$ or $\cQ_2$ is contained in some $1/\eps$-basic rectangle.
\item{(2)}
There exists a $\lceil 1/\eps \rceil$-canonical $\delta$-near solution $\phi'=\cQ'_1,\cQ'_2$ of $\cP$, 
for some $\delta=O(\eps C)$,
such that for all $i\in \{1,2\}$, we have 
$|\aalpha(\cQ_i)-\aalpha(\cQ_i')| \leq O(C\eps n+m)$,
and
$|\bbeta(\cQ_i)-\bbeta(\cQ_i')| \leq O(C\eps nm)$.
\end{description}
\end{lemma}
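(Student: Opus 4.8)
The plan is to start from the given $\gamma$-stable solution $\phi = \cQ_1, \cQ_2$ and, assuming condition~(1) fails, surgically modify it into a $\lceil 1/\eps \rceil$-canonical near-solution while tracking how much the party counts change. Set $t = \lceil 1/\eps \rceil$ and consider the $t$-basic tree $T$ and the $t$-basic rectangles. First I would force $T \subseteq \cZ_1'$: every cell of $T$ currently assigned to $\cQ_2$ gets reassigned to $\cQ_1$. Since $T$ has $O(n/t \cdot t + m/t \cdot t) = O(m+n)$ cells, but more carefully $|T| = O(m + n + mn/t)$ (one extra row and column per basic rectangle, of which there are $\lfloor m/t\rfloor\lfloor n/t\rfloor$), each holding at most $C$ people, the total population moved is $O(C(m + n + mn/t)) = O(C\eps nm + C(m+n)) = O(C\eps nm)$. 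This already explains the $O(C\eps nm)$ bound on the $\bbeta$-change (and contributes the $O(m)$ and $O(C\eps n m)$ terms to the $\aalpha$-change — though the stated $\aalpha$-bound $O(C\eps n + m)$ is tighter, so one must be more economical with which cells actually switch party, presumably using that most of $T$ already lies in $\cQ_1$ or contains few Party~A voters; I would revisit the exact accounting here).

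Next, within each $t$-basic rectangle $X$ with interior $X'$, I would reshape $\cQ_1 \cap X'$ into a union of connected components each attached to $T$ through a single designated cell, pushing everything else in $X \setminus (X' \cup T)$ into $\cZ_2'$, exactly as the canonical definition demands. The key structural point — this is where condition~(1) is used — is that since neither $\cQ_1$ nor $\cQ_2$ is contained in a single basic rectangle, every basic rectangle is "genuinely cut," so the connectivity of both parts is maintained by the reassignment through the tree $T$, which acts as a connected backbone linking all the pieces of $\cZ_1'$. I would verify connectivity of $\cZ_2'$ separately: $\cZ_2'$ is what remains, and the canonical structure (only the interiors of basic rectangles, minus thin $\cZ_1'$-fingers) keeps it connected provided we do the reshaping consistently. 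The population moved in this second phase is bounded by the total population in the "boundary layer" $X \setminus (X' \cup T)$ summed over all $X$, which is again $O(C \eps nm)$ since the boundary layer of each basic rectangle has $O(t)$ cells and there are $O(mn/t^2)$ rectangles — wait, $O(mn/t^2) \cdot O(t) \cdot C = O(Cmn/t) = O(C\eps mn)$, consistent.

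For the near-partition claim, I would observe that $\eeta(\cQ_i') = \eeta(\cQ_i) \pm (\text{population moved}) = \eeta(\cP)/2 \pm O(C\eps nm)$ — wait, we only know $\phi$ is $\gamma$-stable, not exactly balanced, so actually $\phi$ need not be a $0$-near partition to begin with; I would instead need either to assume $\phi$ is already balanced or absorb its imbalance into $\delta$. Taking $\delta = O(\eps C)$ requires the population moved to be $O(\eps C \cdot \eeta(\cP))$, and since $\eeta(\cP) \geq mn$ (no empty cells, at least one person per cell... actually $\eeta(\cP) \geq mn$ only if every cell has $\geq 1$ person, which "no empty cells" gives), we get $O(C\eps nm) = O(\eps C \cdot \eeta(\cP))$, so $\delta = O(\eps C)$ is the right order. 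The main obstacle I anticipate is the delicate $\aalpha$-bound of $O(C\eps n + m)$, which is much smaller than the crude $O(C\eps nm)$ bound on total population moved: this forces the argument to show that reassigning a cell between parties (as opposed to merely between districts) happens rarely — presumably the tree $T$ and the boundary layers are mostly reassigned within the same party or contain $O(1)$ Party~A voters on average — and getting this accounting exactly right, together with carefully specifying the reshaping rule inside each basic rectangle so that connectivity of both pieces is preserved, is the technical heart of the proof.
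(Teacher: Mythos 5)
Your construction is essentially the paper's: keep the interiors of the basic rectangles as in $\phi$, put the basic tree $T$ plus one connecting cell per component of $\cQ_1$ in each interior into $\cQ'_1$, give the rest of the boundary layer to $\cQ'_2$ (connectivity of $\cQ'_2$ following by induction because each component is attached through a single cell, so $\cQ'_1$ stays simply connected), and charge all changes to the $O(\eps m)$ rows and $O(\eps n)$ columns outside the interiors, for a total of $O(C\eps nm)$ voters. Your two worries are both non-issues: the bound $O(C\eps n+m)$ in the statement is a typo for $O(C\eps nm)$ (the paper's own proof establishes only the latter, for both $\aalpha$ and $\bbeta$, and that is all that is used downstream), and $\phi$ being a \emph{solution} of \mwvp$_2$ means it is an exact $2$-equipartition, so $\eeta(\cQ_i)=\eeta(\cP)/2$ and $\delta$-nearness with $\delta=O(\eps C)$ follows from the $O(C\eps nm)$ population shift together with $\eeta(\cP)\geq nm$.
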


\begin{proof}
It suffices to show that if condition (1) does not hold, then condition (2) does.
We define a partition $\phi'\eqdef\cQ'_1,\cQ'_2$ of $\cP$ as follows.
We initialize $\cQ'_1$ to be empty.
Let $T$ be the $1/\eps$-basic tree.
We add $T_1$ to $\cQ'_1$.
For each $1/\eps$-basic rectangle $X$, let $X'$ be its interior.
For each $i\in \{1,2\}$, let $A_i$ be the set of connected components of $X'\cap \cQ_1$.
Since $\phi$ is a valid solution, we have that $\cQ_1$ is connected.
Since condition (1) does not hold, it follows that $\cQ_1$ intersects at least two $1/\eps$-basic rectangles.
Therefore, each component $W\in A_1$ must contain some cell $c_W$ on the boundary of $X'$.
By construction, $c_W$ must be incident to some cell $c_W'$ that is incident to $T$.
We add $c_W'$ to $\cQ'_1$.
Repeating this process for all basic rectangles, and for all components $W$ as above.
Finally, we define $\cQ'_2=\cP\setminus \cQ'_1$.
This completes the definition of the partition $\phi'\eqdef\cQ'_1,\cQ'_2$ of $\cP$.
It remains to show that this is the desired solution.

First, we need to show that $\phi'$ is a valid solution.
To that end, it suffices to show that both $\cQ'_1$ and $\cQ'_2$ are connected.
The fact that $\cQ'_1$ is connected follows directly from its construction.
To show that $\cQ'_2$ is connected we proceed by induction on the construction of $\cQ'_1$.
Initially, $\cQ'_1$ consists of just the cells in $T$, and thus its complement is clearly connected.
When we consider a component $W$, we add $W \cup \{c'_W\}$ to $\cQ'_1$.
Since we add only a single cell that is incident to both $W$ and $\cQ'_1$, it follows inductively that $\cQ'_1$ remains simply connected (that is, it does not contain any holes), and therefore its complement remains connected.
This concludes the proof that both $\cQ'_1$ and $\cQ'_2$ are connected, and therefore $\phi'$ is a valid solution.

The solutions $\phi$ and $\phi'$ can disagree only on cells that are not in the interior of any basic rectangle.
All these cells are contained in the union of $O(\eps m)$ rows and $(\eps n)$ columns.
Thus, the total number of voters in these cells is at most $O(C\eps nm)$.
It follows that for each $i\in \{1,2\}$, we have
$|\aalpha(\cQ_i)-\aalpha(\cQ_i')| \leq O(C\eps nm)$,
and
$|\bbeta(\cQ_i)-\bbeta(\cQ_i')| \leq O(C\eps nm)$.

Since there are no empty cells, we have $\eeta(\cP)\geq nm$.
It follows that for all $i\in \{1,2\}$, we have 
\[
 |\eeta(\cQ_i)-\eeta(\cQ'_i)| \leq O(C\eps nm) \leq C\cdot \eps \cdot \eeta(\cP).
\]
Thus $\phi'$ is $\delta$-near, for some $\delta=O(\eps C)$,
which concludes the proof.
\end{proof}

\begin{lemma}\label{lem:canonical_compute}
Let $\gamma>0$.
Let $\cP$ be an instance of \mwvp$_2$.
Suppose that there are no empty cells and the maximum number of people in any cell of $\cP$ is $C$.
Suppose that there exists a $\delta$-stable partition $\phi\eqdef\cQ_1,\cQ_2$ of $P$.
Then, for any fixed $\eps>0$, there exists an algorithm which given $\cP$ computes some $\delta$-near partition 
$\phi'\eqdef\cQ'_1,\cQ'_2$ of $\cP$,
for some $\delta=O(\eps C)$,
such that for all $i\in \{1,2\}$, we have 
$|\aalpha(\cQ_i)-\aalpha(\cQ_i')| \leq O(C\eps nm)$,
and
$|\bbeta(\cQ_i)-\bbeta(\cQ_i')| \leq O(C\eps nm)$,
in time $2^{O(1/\eps^2)} (nmC)^{O(1)}$.
\end{lemma}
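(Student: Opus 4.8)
The plan is to turn the structural dichotomy of Lemma~\ref{lem:canonical_existence} into an algorithm: its first case is handled by brute force, and the canonical near-partitions that arise in its second case are searched by a dynamic program whose only super-polynomial factor is the $2^{O(1/\eps^2)}$ coming from the possible patterns inside a single basic rectangle. Fix $t=\lceil 1/\eps\rceil$, so $t=O(1/\eps)$ and $t^2=O(1/\eps^2)$, and build the $t$-basic rectangles $\cB$ and the $t$-basic tree $T$. By Lemma~\ref{lem:canonical_existence}, since a stable partition $\phi\eqdef\cQ_1,\cQ_2$ of $\cP$ is promised, at least one of the following holds: \textbf{(i)} one of $\cQ_1,\cQ_2$ lies inside a single $t$-basic rectangle; or \textbf{(ii)} there is a $t$-canonical $\delta$-near partition $\phi'$, with $\delta=O(\eps C)$, whose party-$A$ and party-$B$ vote counts in each part differ from those of $\phi$ by at most $O(C\eps nm)$. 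The algorithm builds a list of candidate partitions covering both cases and discards those that are not valid $\delta$-near partitions; it returns this list --- in which, by Lemma~\ref{lem:canonical_existence}, some element meets the stated bounds relative to $\phi$ --- or, if a single partition is wanted, the surviving candidate of smallest total efficiency gap.

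Case \textbf{(i)} is immediate: there are $\lfloor m/t\rfloor\,\lfloor n/t\rfloor=O(\eps^2 nm)$ basic rectangles, each containing at most $2^{t^2}=2^{O(1/\eps^2)}$ subsets of cells and hence at most that many connected subsets $S$; for every such $S$ we test in $(nm)^{O(1)}$ time whether $\cP\setminus S$ is connected and whether $(S,\cP\setminus S)$ is $\delta$-near, adding it to the list if so. This contributes $2^{O(1/\eps^2)}(nm)^{O(1)}$ to both the candidate count and the running time.

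For case \textbf{(ii)} I would run a dynamic program over the basic rectangles, processed in any fixed order (say row-major over $\cB$), with state $(r,a,p)$: here $r$ counts the rectangles processed so far, and $a$ (respectively $p$) is the number of party-$A$ voters (respectively the total population) already placed in $\cZ_1$; since $a,p\le nmC$ there are $(nmC)^{O(1)}$ states. All cells of $T$ are forced into $\cZ_1$. At a rectangle $X$ with interior $X'$ there are at most $2^{|X|}\le 2^{t^2}=2^{O(1/\eps^2)}$ choices for $\cZ_1\cap X$; we enumerate them and keep those consistent with the canonical constraints --- each connected component of $\cZ_1\cap X'$ is simply connected and joined to $T$ by exactly one ``bridge'' cell of the ring $X\setminus(X'\cup T)$, and every other ring cell lies in $\cZ_2$. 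Exactly as in the proof of Lemma~\ref{lem:canonical_existence}, attaching each simply-connected interior component to the tree through a single cell keeps $\cZ_1$ simply connected, hence $\cZ_2$ connected; and since the rectangles are pairwise disjoint this consistency test is purely local. Each retained configuration adds a fixed amount to $a$ and to $p$. Once all rectangles are processed, every reached state with $p\in\big[(\tfrac12-\delta)\eeta(\cP),(\tfrac12+\delta)\eeta(\cP)\big]$ is kept and a witness partition $\cZ_1,\cZ_2$ is recovered from back-pointers and appended to the list. The running time is (number of states)$\,\times\,2^{O(1/\eps^2)}\times\mathrm{poly}(1/\eps,nm)=2^{O(1/\eps^2)}(nmC)^{O(1)}$. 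Every $t$-canonical $\delta$-near partition, $\phi'$ included, is among those explored, so a state with $(a,p)=(\aalpha(\cQ'_1),\eeta(\cQ'_1))$ is reached; its witness has $\aalpha(\cZ_1)=\aalpha(\cQ'_1)$ and $\aalpha(\cZ_2)=\aalpha(\cP)-\aalpha(\cQ'_1)=\aalpha(\cQ'_2)$, whence $|\aalpha(\cQ_i)-\aalpha(\cZ_i)|=|\aalpha(\cQ_i)-\aalpha(\cQ'_i)|\le O(C\eps nm)$ for $i\in\{1,2\}$ by Lemma~\ref{lem:canonical_existence}, and the corresponding bound for $\bbeta$ follows because $\aalpha(\cP)$ and $\bbeta(\cP)$ are fixed.

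The part I expect to be the main obstacle is making the local analysis of case \textbf{(ii)} fully rigorous: one must verify that testing the canonical constraints rectangle by rectangle is sound --- in particular that checking, within each basic rectangle, that every interior component is simply connected and has exactly one admissible bridge cell is enough --- and, more delicately, that doing these checks locally certifies that $\cZ_1$ is globally simply connected and $\cZ_2$ globally connected, rather than a hole forming that straddles several basic rectangles along the comb-shaped tree $T$. Once this local-to-global reduction is established, the remaining ingredients --- the $(nmC)^{O(1)}$ bound on the number of states, the additive updates to $a$ and $p$, and the back-pointer reconstruction --- are routine, as is the brute force of case \textbf{(i)}.
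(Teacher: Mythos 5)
Your proposal follows essentially the same route as the paper: the same dichotomy from Lemma~\ref{lem:canonical_existence}, the same brute-force enumeration over subsets of a single $1/\eps$-basic rectangle for case (i), and the same rectangle-by-rectangle dynamic program with $(nmC)^{O(1)}$ states and $2^{O(1/\eps^2)}$ local configurations for case (ii) (your state $(a,p)$ is an equivalent reparametrization of the paper's $(L_{\mathrm{PartyA}},L_{\mathrm{PartyB}})$). The local-to-global connectivity concern you flag is real but is resolved by the definition of a canonical solution itself (the tree $T$ and the single-bridge rule fix the global structure), and the paper's own proof treats it no more carefully than you do.
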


\begin{proof}
We can check whether there exist a partition $\phi\eqdef\cQ_1, \cQ_2$ satisfying the conditions, and such that either 
$\cQ_1$ or $\cQ_1$ is contained in the interior of a single $1/\eps$-basic rectangle.
This can be done by trying all $1/\eps$-basic rectangles, and all possible subsets of the interior of each $1/\eps$-basic rectangle, in time $(n/\eps)(m/
\eps) 2^{O(1/\eps^2)} = n m 2^{O(1/\eps^2)}$.

It remains to consider the case where neither of $\cQ_1$ and $\cQ_2$ is contained in the interior of any $1/\eps$-basic rectangle.
It follows that condition (2) of Lemma \ref{lem:canonical_existence} holds.
That is, there exists some $\lceil 1/\eps \rceil$-canonical $\delta$-near solution $\phi'\eqdef\cQ'_1,\cQ'_2$ of $\cP$, 
for some $\delta=O(\eps C)$,
such that for all $i\in \{1,2\}$, we have 
$|\aalpha(\cQ_i)-\aalpha(\cQ_i')| \leq O(C\eps n+m)$,
and
$|\bbeta(\cQ_i)-\bbeta(\cQ_i')| \leq O(C\eps nm)$.
We can compute such a partition $\phi'$ via dynamic programming, as follows.
Let $I$ be the union of the interiors of all $1/\eps$-basic rectangles.
By the definition of a canonical partition, it suffices to compute $\cQ'_1\cap I$ and $\cQ'_2\cap I$.
Since $\cQ'_2 \cap I = I\setminus (\cQ'_1\cap I)$, it suffices to compute $\cQ'_1\cap I$.
Let $L_\aalpha=\aalpha(\cQ'_1\cap I)$,
and $L_\bbeta=\bbeta(\cQ'_1\cap I)$.
Clearly, $L_\aalpha,L_\bbeta\in \{0,\ldots,Cnm\}$.
Thus there are at most $O((nmC)^2)$ different values for the pair $(L_\aalpha,L_\bbeta)$.
We construct a dynamic programming table, containing one entry for each possible value for the pair $(L_\aalpha,L_\bbeta)$.
Initially, all entries of the table are unmarked, except for the entry that corresponds to the pair $(0,0)$.
We iteratively consider all $1/\eps$-basic rectangles.
When considering some $1/\eps$-basic rectangle $X$, with interior $X'$, we enumerate all possibilities for $Y=X'\cap \cQ'_1$.
There are $2^{O(1/\eps^2)}$ possibilities for $Y$.
For each such possibility, we update the dynamic programming table by marking the position $(i+\aalpha(Y), j+\bbeta(Y))$, if the position $(i,j)$ is already marked from the previous iteration.
The total running time is $2^{O(1/\eps^2)} (nmC)^{O(1)}$.
\end{proof}

\begin{theorem}\label{app1}
Let $\gamma>0$.
Let $\cP$ be an instance of \mwvp$_2$.
Suppose that there are no empty cells and the maximum number of people in any cell of $\cP$ is $C$.
Suppose that there exists a $\delta$-stable $\gamma$-near partition $\phi\eqdef \cQ_1,\cQ_2$ of $P$.
Then, for any fixed $\eps>0$, there exists an algorithm which given $\cP$ computes some $\delta$-near partition 
$\phi'\eqdef\cQ'_1,\cQ'_2$ of $\cP$,
for some $\delta=O(\eps C)$,
such that 
$\mathsf{Effgap}(\phi') \leq (1+\eps) \mathsf{Effgap}(\phi)$, in time $2^{O(1/\eps^2)} 2^{O(C/\delta)^2} (nmC)^{O(1)}$.
\end{theorem}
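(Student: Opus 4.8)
\medskip
\noindent
\textbf{Proof plan.}
The plan is to derive the $(1+\eps)$-approximation from the canonical-partition dynamic program in the proof of Lemma~\ref{lem:canonical_compute}, run at a fine enough accuracy, together with an error analysis that converts its additive vote-count guarantee into a multiplicative guarantee on the efficiency gap, using the stability of $\phi$ to supply the needed slack. Two preliminary observations drive everything. First, a reformulation of the objective: for any valid solution $\cZ_1,\cZ_2$ of \mwvp$_2$ we have $\eeta(\cZ_1)+\eeta(\cZ_2)=\eeta(\cP)$, so writing $\mathsf{Effgap}(\cZ_j)=2\aalpha(\cZ_j)-r_j\,\eeta(\cZ_j)$ with $r_j\in\{\nicefrac{1}{2},\nicefrac{3}{2}\}$ the winner indicator of $\cZ_j$, the total efficiency gap equals $\bigl|\,2\aalpha(\cP)-r_2\,\eeta(\cP)-(r_1-r_2)\,\eeta(\cZ_1)\,\bigr|$; once the pair of winners is fixed this is an affine function of the single number $\eeta(\cZ_1)$, with slope $|r_1-r_2|\le 1$. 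Second, a lower bound from stability and near-ness: since $\mathsf{Effgap}(\cQ_i)>\delta\,\eeta(\cQ_i)>0$ for $i\in\{1,2\}$, the total efficiency gap of $\phi$ is $\mathsf{Effgap}(\cQ_1)+\mathsf{Effgap}(\cQ_2)>\delta\,\eeta(\cP)\ge\delta\,nm$ (no empty cells), while $\eeta(\cQ_1),\eeta(\cQ_2)=\Theta(\eeta(\cP))$ by $\gamma$-near-ness; in particular a stable district won by Party~A has margin $\aalpha(\cQ_i)-\tfrac12\eeta(\cQ_i)>\tfrac14\eeta(\cQ_i)=\Omega(\eeta(\cP))$, i.e.\ it sits far from the winner threshold.

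The algorithm runs the dynamic program from the proof of Lemma~\ref{lem:canonical_compute} at accuracy $\eps_0:=c\,\eps\,\delta/C$ for a small absolute constant $c$. A canonical solution is determined by the fixed tree $T\subseteq\cZ_1$ together with, for each basic rectangle, the chosen subset of its interior and the connector cell that subset induces; recording the $(\aalpha,\bbeta)$-contribution of the $\cZ_1$-side of each rectangle (interior subset plus connector) in the DP state, every reachable state then determines $\aalpha(\cZ_1)$ and $\eeta(\cZ_1)$, hence by the reformulation it determines the total efficiency gap of the corresponding canonical $O(\eps_0 C)$-near solution. So the algorithm outputs, over all reachable canonical $O(\eps_0 C)$-near solutions --- including those from the brute-force branch of that proof, where a part lies inside a single $1/\eps_0$-basic rectangle --- one solution $\phi'$ of minimum total efficiency gap, and its running time is that of the dynamic program at accuracy $\eps_0$. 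For correctness it suffices to exhibit \emph{one} canonical $O(\eps_0 C)$-near solution of total efficiency gap at most $(1+\eps)\,\mathsf{Effgap}(\phi)$. Applying Lemma~\ref{lem:canonical_existence} to $\phi$ with accuracy $\eps_0$ gives (outside the brute-force branch) a canonical $O(\eps_0 C)$-near solution $\phi''=\cQ''_1,\cQ''_2$ with $|\aalpha(\cQ_i)-\aalpha(\cQ''_i)|,\,|\bbeta(\cQ_i)-\bbeta(\cQ''_i)|=O(C\eps_0 nm)=O(\eps\,\delta\,nm)$, hence $|\eeta(\cQ_i)-\eeta(\cQ''_i)|=O(\eps\,\delta\,nm)$. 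After shrinking $\eps$ to a fixed absolute constant if necessary (which only strengthens the conclusion and leaves the claimed running time intact), this perturbation is below the winning margin of every stable Party-A district and of every Party-B district whose margin exceeds it, so none of those changes winner; for them the affine reformulation gives $\bigl|\mathsf{Effgap}(\phi'')-\mathsf{Effgap}(\phi)\bigr|\le|\eeta(\cQ''_1)-\eeta(\cQ_1)|=O(\eps\,\delta\,nm)\le\eps\,\mathsf{Effgap}(\phi)$, so $\mathsf{Effgap}(\phi'')\le(1+\eps)\,\mathsf{Effgap}(\phi)$ and therefore $\mathsf{Effgap}(\phi')\le\mathsf{Effgap}(\phi'')\le(1+\eps)\,\mathsf{Effgap}(\phi)$.

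The one case that escapes this argument, and which I expect to be the crux, is that $\mathsf{Effgap}(\cZ_j)$ is \emph{discontinuous} at the winner threshold $\aalpha(\cZ_j)=\tfrac12\eeta(\cZ_j)$, jumping by $\eeta(\cZ_j)$ there, and stability does \emph{not} keep a Party-B district away from that threshold on the winning side --- it forces the margin below roughly $\tfrac14\eeta(\cQ_i)$, not above $0$ by any fixed amount --- so a Party-B district lying within the perturbation of its threshold can flip winner in $\phi''$. I would handle this by noting that any such near-threshold Party-B district already contributes within $O(\eps\,\delta\,nm)$ of its maximal possible value $\tfrac12\eeta(\cQ_i)=\Omega(\eeta(\cP))$, so $\mathsf{Effgap}(\phi)=\Omega(\eeta(\cP))$; then a short check of the sign pattern of the two terms of $\mathsf{Effgap}(\phi'')$ --- only a few subcases, according to which district(s) flip and who wins the other one, using $|\eeta(\cQ_1)-\eeta(\cQ_2)|=O(\gamma\,\eeta(\cP))$ from near-ness --- shows $\mathsf{Effgap}(\phi'')\le\mathsf{Effgap}(\phi)+O(\eps\,\delta\,nm)$ in this case too, again $\le(1+\eps)\,\mathsf{Effgap}(\phi)$ for $c$ small enough. (An alternative is to re-run the construction in the proof of Lemma~\ref{lem:canonical_existence} so that no voter is moved across the boundary of such a district, keeping its winner fixed at the cost of $O(\eps_0 nm)$ extra moved voters elsewhere.) Substituting $\eps_0=\Theta(\eps\,\delta/C)$ into the cost of the dynamic program yields the stated running time $2^{O(1/\eps^2)}2^{O(C/\delta)^2}(nmC)^{O(1)}$ (with $\eps$ and $\delta$ treated as constants, as they are here), and $\phi'$ is $O(\eps C)$-near by construction; in a full write-up the bulk of the work is precisely this sign-pattern case analysis around the discontinuity, the rest being routine.
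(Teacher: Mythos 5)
Your proposal takes essentially the same route as the paper: the paper's entire proof of Theorem~\ref{app1} is the single line ``apply Lemma~\ref{lem:canonical_compute} with $\eps$ replaced by $\min\{\eps,O(\delta/C)\}$,'' which is exactly your choice $\eps_0=\Theta(\eps\delta/C)$ together with the observation that stability forces $\mathsf{Effgap}(\phi)>\delta\,\eeta(\cP)\geq \delta\, nm$, so the additive $O(C\eps_0 nm)$ vote-count error becomes a relative $\eps$ error. The one place you go beyond the paper is the winner-flip discontinuity of $\mathsf{Effgap}$ at $\aalpha(\cZ_j)=\tfrac12\eeta(\cZ_j)$ for a narrowly won Party-B district: this is a genuine issue that the paper's one-line proof silently ignores, and your sign-pattern/boundary-preserving repair is the kind of argument that would actually be needed to make the claimed $(1+\eps)$ multiplicative bound rigorous.
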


\begin{proof}
It follows from Lemma \ref{lem:canonical_compute} by setting $\eps=\min\{\eps, O(\delta / C)\}$.
\end{proof}

\subsection{The Case of Convex Shaped Partitions}

This case encompasses constraint (\emph{i}) since
convexity has been used in gerrymandering studies such as~\cite{azavea09} as a measure of 
compactness to examine how redistricting reshapes the geography of congressional districts. 
We recall that some $X\subseteq \mathbb{R}^2$ is called \emph{$y$-convex} if for every vertical line 
$\ell$, we have that $X\cap \ell$ is either empty, or a line segment.
We also say that a $\kappa$-partition 
${\cal Q}_1,\ldots,Q_\kappa$ of ${\cal P}$ is \emph{$y$-convex} if for all 
$i\in \{1,\ldots,\kappa\}$, ${\cal Q}_i$ is \emph{$y$-convex}.

\begin{theorem}\label{bhubhu}
Let ${\cal P}$ be a rectilinear polygon realized in the $m\times n$ grid, and let 
$N=\eeta({\cal P})$ be the total population on ${\cal P}$.
There exists an algorithm for computing a $y$-convex $\kappa$-equipartition of ${\cal P}$ of minimum efficiency gap, 
with running time $N^{O(\kappa)}$.
In particular, the running time is polynomial when the total population is polynomial and the total number of 
partitions is a constant.
\end{theorem}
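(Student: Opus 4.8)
The plan is to design a dynamic-programming algorithm that sweeps the grid column-by-column, exploiting the fact that $y$-convexity forces each district to occupy a contiguous interval of rows within each column. The key structural observation is that if $\cQ_1,\dots,\cQ_\kappa$ is a $y$-convex partition, then restricting to any single column $j$ yields a partition of that column's cells into at most $\kappa$ vertical intervals, and these intervals are stacked in some order. As we move from column $j$ to column $j+1$, the set of districts "active" in the column and the row-boundaries between consecutive active districts can only change in limited ways — a district can start, end, or have its interval endpoints shift — subject to the connectivity constraint that each $\cQ_i$ remains connected (which, given $y$-convexity, amounts to each district occupying a contiguous set of columns, or more precisely that its column-slices form a connected "staircase" region).

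**First I would** formalize the state of the sweep. After processing columns $0,\dots,j$, the relevant information is: (a) which districts have been completely finished (so their $\eeta$ and $\aalpha$ totals are frozen), (b) for the districts currently intersecting column $j$, their vertical ordering and the row-indices of the interval boundaries in column $j$, and (c) for each such district, the accumulated population $\eeta$ and accumulated Party~A count $\aalpha$ so far. The boundaries within a column are determined by at most $\kappa+1$ row indices out of $m$, so there are $m^{O(\kappa)}$ possible "column profiles"; the accumulated $(\eeta,\aalpha)$ pairs live in a range of size $O(N)$ each, contributing an $N^{O(\kappa)}$ factor. **The transition** from a state at column $j$ to column $j+1$ enumerates all valid ways to assign the cells of column $j+1$ to the active-or-newly-started districts consistently with $y$-convexity and the vertical ordering; for each choice we update the accumulated counts. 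By Lemma~\ref{lem-abs}, we do not actually need to track $\eeta$ exactly to high precision — the equipartition constraint $\eeta(\cQ_i)=N/\kappa$ can be checked at the end — but it is cleanest to carry $\eeta(\cQ_i)$ in the state and only keep states where no district has yet exceeded $N/\kappa$.

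**The final step** is to read off the answer. When the sweep reaches the last column, every district is finished and we have, for each district, the pair $(\eeta(\cQ_i),\aalpha(\cQ_i))=(N/\kappa,\aalpha(\cQ_i))$; we reject any DP path where some $\eeta(\cQ_i)\neq N/\kappa$ or where fewer than $\kappa$ districts were created. Among surviving paths, by Lemma~\ref{lem-abs}(a) the total efficiency gap depends only on $\aalpha(\cP)$ (which is fixed) and on $z$, the number of districts won by Party~A; but since we want the true minimum and the individual $\aalpha(\cQ_i)$ values are all determined along the path, we simply compute $\big|\sum_i \mathsf{Effgap}(\cQ_i)\big|$ directly for each path and take the minimum. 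Standard DP backtracking then recovers the actual partition.

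**The main obstacle** I anticipate is making the connectivity requirement precise and checkable within a bounded-size state. A $y$-convex set need not have a $y$-convex complement, and a district could in principle be a single column-interval that "pauses" — but $y$-convexity alone does not force column-contiguity, so I must be careful: a $y$-convex connected region can skip columns only if... actually it cannot, since a vertical slice is a single segment and connectivity across a gap column is impossible, so each district does occupy a contiguous run of columns — this needs a short lemma. The subtler point is that when several districts are stacked in a column and one of them ends, the districts above and below it must still be able to "meet" or be separated consistently in later columns without breaking connectivity of the still-active ones; encoding which districts are above/below which, and that a finished district cannot reappear, is exactly what the ordered column-profile state captures, so the argument should go through, but verifying that every $y$-convex equipartition is generated by some DP path (completeness) and that every DP path yields a valid $y$-convex equipartition (soundness) is where the real work lies.
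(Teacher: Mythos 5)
Your proposal is correct and follows essentially the same route as the paper: a column-by-column dynamic program whose state records the labeled partition of the current column into at most $\kappa$ vertical segments together with the accumulated per-district vote totals (the paper tracks $(\aalpha,\bbeta)$ per district where you track $(\eeta,\aalpha)$, which is equivalent), giving $N^{O(\kappa)}$ states and transitions. If anything, you are more careful than the paper's own write-up, which does not explicitly address the cross-column connectivity and ``finished district cannot reappear'' issues that you correctly flag as needing to be encoded in the column profile.
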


\begin{proof}
Let ${\cal P}^* = {\cal Q}_1^*,\ldots,{\cal Q}_\kappa^*$ be a $y$-convex $\kappa$-equipartition of ${\cal P}$ of 
minimum efficiency gap.
For any $i\in \{1,\ldots,n\}$, let $C_i$ be the $i$-th column of ${\cal P}$.
We observe that for all $i\in \{1,\ldots,n\}$, and for all $j\in \{1,\ldots,\kappa\}$, 
we have that ${\cal Q}^*_j\cap C_i$ is either empty, or consists of a single rectangle of width $1$.
Let ${\cal C}_i$ be the set of all partitions of $C_i$ into exactly $\kappa$ (possibly empty) segments, 
each labeled with a unique integer in $\{1,\ldots,\kappa\}$.
We further define 
\[
a^*_{i,j} = \aalpha({\cal Q}_j^* \cap (C_1\cup \ldots \cup C_i)),
\]
and
\[
b^*_{i,j} = \bbeta({\cal Q}_j^* \cap (C_1\cup \ldots \cup C_i)).
\]

The algorithm proceeds via dynamic programming.
For each $i\in \{1,\ldots,n\}$, let $I_i=\mathbb{N}^{2\kappa} \times {\cal S}_i$.
Let $X_i=(a_{i,1},b_{i,1},\ldots,a_{i,\kappa},b_{i,\kappa},{\cal Z}_i)\in I_i$.
If $i=1$, then we say that $X_i$ is \emph{feasible} if for all $j\in \{1,\ldots,\kappa\}$, 
the unique set $Y\in {\cal Z}_1$ labeled $j$ satisfies 
\[
a_{1,j} = \aalpha(Y) \text{ and } b_{1,j} = \bbeta(Y).
\]
Otherwise, if $i>1$, we say that $X_i$ is feasible if the following holds:
There exists some feasible $X_{i-1}=(a_{i-1,1},b_{i-1,1},\ldots,a_{i-1,\kappa},b_{i-1,\kappa},{\cal Z}_{i-1})\in I_{i-1}$, 
such that for all $j\in \{1,\ldots,\kappa\}$, we have that the unique set $Y\in {\cal Z}_i$ labeled $j$ satisfies 
\[
a_{i,j} = a_{i-1,j} + \aalpha(Y) \text{ and } b_{i,j} = b_{i-1,j}+ \bbeta(Y).
\]
For each $i\in \{1,\ldots,n\}$ we inductively compute the set ${\cal F}_i$ of all feasible $X_i\in I_i$.
This can clearly be done in time $N^{O(\kappa)}$.
It is immediate that for all $i\in \{1,\ldots,n\}$, there exists some $X_i\in I_i$ that achieves efficiency gap equal to the 
restriction of ${\cal P}^*$ on the union of the first $i$ columns.
Thus, by induction on $i$, the algorithm computes a feasible solution with optimal efficiency gap.
\end{proof}

\section{Experimental Results for Real Data on Four Gerrymandered States}
\label{sec-empirical}

To show that it is indeed possible in practice to solve the problem of minimization of the efficiency gap, 
we design a \emph{fast randomized} algorithm based on the \emph{local search paradigm} for this problem.
Our algorithm starts with a given 
$\cQ_1,\dots,\cQ_\kappa$ partition of the input state $\cP$.
Note that 
$\cQ_1,\dots,\cQ_\kappa$ was only an \emph{approximate} equipartition
in the sense that the values 
$\eeta(\cQ_1),\dots,\eeta(\cQ_\kappa)$ are as close to each other as practically possible 
but need not be \emph{exactly} equal (\emph{cf}. US Supreme Court ruling in Karcher v.\ Daggett $1983$).
%%%
For designing alternate valid district plans, we therefore allow 
any partition $\cQ'_1,\dots,\cQ'_\kappa$ of $\cP$
such that, for every $j$, 
$\min_{1\leq i\leq\kappa}\left\{ \eeta(\cQ_i) \right\}\leq\eeta(\cQ'_j)\leq\max_{1\leq i\leq\kappa}\left\{ \eeta(\cQ_i) \right\}$.

\subsection{Input preprocessing}

We preprocess the input map to generate an undirected unweighted planar graph $\Gr=(\Ve,\Ed)$.
Each node in the graph corresponds to a planar subdivision of a county that is assigned 
to a district (or to an entire county if it is assigned to a district as a whole). Two
nodes are connected by an edge if and only if they share a border on the map. 
Each node $v\in \Ve$ has three corresponding numbers: 
$\aalpha(v)$ 
(total number of voters for Party~A), 
$\bbeta(v)=$
(total number of voters for Party~B), and 
$\eeta(v)=\aalpha(v)+\bbeta(v)$
(total population in $v$)\footnote{We ignore negligible ``third-party'' votes, \IE, 
votes for candidates other than the democratic and republican parties.}.
A district $\cQ$ is then a connected sub-graph of $\Gr$ 
with 
$\aalpha(\cQ)=\sum_{v\in\cQ}\aalpha(v)$ 
and 
$\bbeta(\cQ)=\sum_{v\in\cQ}\bbeta(v)$.  

%%\subsection{A Necessary and Reasonable Assumption for Input Data}
%%
%%For the states of Texas and Virginia, total number of republican and democratic voters in each county
%%and the maps showing the district boundaries were available. 
%%However, district wise data and district wise county level data
%%on total number of republican and democratic voters were not available. 
%%To overcome this deficiency of available data, we made the following assumption:
%%
%%\begin{quote}
%%\em
%%For a county shared between two or more districts, the total county population was 
%%equally divided among the districts that shared the county.
%%\end{quote}
%%

\subsection{Availability and Format of Raw Data}

Link to all data files for the three states used in the paper are 
available in \url{http://www.cs.uic.edu/~dasgupta/gerrymander/index.html}.
Each data is an EXCEL spreadsheet. Explanations of various columns of the spreadsheet are as follows:
%%%%%%%%%%%%%%%%%%%%%%%%%%%%%%%%%%%%%%%%%%%%%%%%%%%%%%%%%%%%%%%%%%%%%%%%%%%%%%%%%%%%%%%%%%%%%%%%%%%%%%%
\begin{description}
\item[Column 1: District:]
This column identifies the district number of the county in column 3.
%%%%%%%%%%%%%%%%%%%%%%%%%%%%%%%%%%%%%%%%%%%%%%%%%%%%%%%%%%%%%%%%%%%%%%%%%%%%%%%%%%%%%%%%%%%%%%%%%%%%%%%
\item[Column 2: County\_id:]
Column 1 and column 2 together form an unique identifier for the counties in column 3. 
A county is identified by its County\_id (column 2) and the District (column 1) it belongs to. 
This was specifically needed to identify and differentiate the counties that belonged to more than one district.

The software considers the counties belonging to different districts as separate entities.
%%%%%%%%%%%%%%%%%%%%%%%%%%%%%%%%%%%%%%%%%%%%%%%%%%%%%%%%%%%%%%%%%%%%%%%%%%%%%%%%%%%%%%%%%%%%%%%%%%%%%%%
\item[Column 3: County:] 
This column contains the name of the county.
%%%%%%%%%%%%%%%%%%%%%%%%%%%%%%%%%%%%%%%%%%%%%%%%%%%%%%%%%%%%%%%%%%%%%%%%%%%%%%%%%%%%%%%%%%%%%%%%%%%%%%%
\item[Column 4: Republicans:] 
This contains the total number of votes in favor of the Republican party (GOP) in the county identified by column 1 and column 2.
%%%%%%%%%%%%%%%%%%%%%%%%%%%%%%%%%%%%%%%%%%%%%%%%%%%%%%%%%%%%%%%%%%%%%%%%%%%%%%%%%%%%%%%%%%%%%%%%%%%%%%%
\item[Column 4: Democrats:] 
This contains the total number of votes in favor of the Democratic party in the county identified by column 1 and column 2.
%%%%%%%%%%%%%%%%%%%%%%%%%%%%%%%%%%%%%%%%%%%%%%%%%%%%%%%%%%%%%%%%%%%%%%%%%%%%%%%%%%%%%%%%%%%%%%%%%%%%%%%
\item[Column 5: Neighbors:]
This column contains information about the ``neighboring counties'' of the given county. 
Neighboring counties represent the counties that share a boundary with the county identified by column 1 and column 2. 
Individual neighbors are separated by commas.
\end{description}
%%%%%%%%%%%%%%%%%%%%%%%%%%%%%%%%%%%%%%%%%%%%%%%%%%%%%%%%%%%%%%%%%%%%%%%%%%%%%%%%%%%%%%%%%%%%%%%%%%%%%%%

\subsection{The Local-search Heuristic}

Informally, our algorithm starts with the existing (possibly gerrymandered) districts and then repeatedly 
attempts to \emph{reassign} counties (or parts of counties) into neighboring districts. This was done on a \emph{semi-random} basis, 
and on average about $100$ iterations were carried out in each run. 
Each time a county (or a part of a county) was shifted, 
the efficiency gap was calculated to check if it was less than the prior efficiency gap.
Exact details of our algorithm are shown in \FI{alg1}.

%%%%%%%%%%%%%%%%%%%%%%%%%%%%%%%%%%%%%%%%%%%%%%%%%%%%%%%%%%%%%%%%%%%%%%%%%%%%%%%%%%%%%%%%%%%%%%%%%%%%%%%
%%%%%%%%%%%%%%%%%%%%%%%%%%%%%%%%%%%%%%%%%%%%%%%%%%%%%%%%%%%%%%%%%%%%%%%%%%%%%%%%%%%%%%%%%%%%%%%%%%%%%%%
%%%%%%%%%%%%%%%%%%%%%%%%%%%%%%%%%%%%%%%%%%%%%%%%%%%%%%%%%%%%%%%%%%%%%%%%%%%%%%%%%%%%%%%%%%%%%%%%%%%%%%%
\captionsetup{width=\textwidth}
\begin{table}[htbp]
\begin{tabular}{l }
\toprule
%%%%%%%%%%%%%%%%%%%%%%%%%%%%%%%%%%%%%%%%%%%%%%%%%%%%%%%%%%%%%%%%%%%%%%%%%%%%%%%%%%%%%%%%%%%%%
start with the current districts, say $\cQ_1,\dots,\cQ_{\kappa}$ 
\\
[3pt]
%%%%%%%%%%%%%%%%%%%%%%%%%%%%%%%%%%%%%%%%%%%%%%%%%%%%%%%%%%%%%%%%%%%%%%%%%%%%%%%%%%%%%%%%%%%%%
\rrepeat\ $\mu$ times 
    $(*$ $\mu$ was set to $100$ in actual run $*)$
\\
[3pt]
%%%%%%%%%%%%%%%%%%%%%%%%%%%%%%%%%%%%%%%%%%%%%%%%%%%%%%%%%%%%%%%%%%%%%%%%%%%%%%%%%%%%%%%%%%%%%
\hspace*{0.2in}
select a random $r$ from the set $\{0,1,\dots,k\}$ for some $0<k<|V|$ 
\hspace*{0.2in} 
$(*$ $k=20$ in actual run $*)$
\\
[3pt]
%%%%%%%%%%%%%%%%%%%%%%%%%%%%%%%%%%%%%%%%%%%%%%%%%%%%%%%%%%%%%%%%%%%%%%%%%%%%%%%%%%%%%%%%%%%%%
\hspace*{0.2in}
select $r$ nodes $v_1,\dots,v_r$ from $\Gr$ at random 
    \hspace*{0.2in}
    $(*$ Note that a node is a county or part of a county $*)$
\\
[3pt]
%%%%%%%%%%%%%%%%%%%%%%%%%%%%%%%%%%%%%%%%%%%%%%%%%%%%%%%%%%%%%%%%%%%%%%%%%%%%%%%%%%%%%%%%%%%%%
\hspace*{0.2in}
$\mathsf{counties\_done}\leftarrow\emptyset$
\\
[3pt]
%%%%%%%%%%%%%%%%%%%%%%%%%%%%%%%%%%%%%%%%%%%%%%%%%%%%%%%%%%%%%%%%%%%%%%%%%%%%%%%%%%%%%%%%%%%%%
\hspace*{0.2in}
\ffor\ each $v_i$ \ddo 
\\
[3pt]
%%%%%%%%%%%%%%%%%%%%%%%%%%%%%%%%%%%%%%%%%%%%%%%%%%%%%%%%%%%%%%%%%%%%%%%%%%%%%%%%%%%%%%%%%%%%%
\hspace*{0.4in}
\iif\ \emph{all} neighbors of $v_i$ do \emph{not} belong to the same district as $v_i$ \tthen 
\\
[3pt]
%%%%%%%%%%%%%%%%%%%%%%%%%%%%%%%%%%%%%%%%%%%%%%%%%%%%%%%%%%%%%%%%%%%%%%%%%%%%%%%%%%%%%%%%%%%%%
\hspace*{0.6in}
\iif\ $v_i\notin\mathsf{counties\_done}$ \tthen
\\
[3pt]
%%%%%%%%%%%%%%%%%%%%%%%%%%%%%%%%%%%%%%%%%%%%%%%%%%%%%%%%%%%%%%%%%%%%%%%%%%%%%%%%%%%%%%%%%%%%%
\hspace*{0.8in}
add $v_i$ to $\mathsf{counties\_done}$ 
\\
[3pt]
%%%%%%%%%%%%%%%%%%%%%%%%%%%%%%%%%%%%%%%%%%%%%%%%%%%%%%%%%%%%%%%%%%%%%%%%%%%%%%%%%%%%%%%%%%%%%
\hspace*{0.8in}
\ffor\ every neighbor $v_j$ of $v_i$ \ddo
\\
[3pt]
%%%%%%%%%%%%%%%%%%%%%%%%%%%%%%%%%%%%%%%%%%%%%%%%%%%%%%%%%%%%%%%%%%%%%%%%%%%%%%%%%%%%%%%%%%%%%
\hspace*{1.0in}
\iif\ assigning $v_i$ to the district of $v_j$ produces no district with disconnected parts \tthen 
\\
[3pt]
%%%%%%%%%%%%%%%%%%%%%%%%%%%%%%%%%%%%%%%%%%%%%%%%%%%%%%%%%%%%%%%%%%%%%%%%%%%%%%%%%%%%%%%%%%%%%
\hspace*{1.2in}
%%\eelse\ 
assign $v_i$ to the district of one of its neighbors
\\
[3pt]
%%%%%%%%%%%%%%%%%%%%%%%%%%%%%%%%%%%%%%%%%%%%%%%%%%%%%%%%%%%%%%%%%%%%%%%%%%%%%%%%%%%%%%%%%%%%%
\hspace*{1.2in}
recalculate new districts, say $\cQ'_1,\dots,\cQ'_{\kappa}$ 
\\
[3pt]
%%%%%%%%%%%%%%%%%%%%%%%%%%%%%%%%%%%%%%%%%%%%%%%%%%%%%%%%%%%%%%%%%%%%%%%%%%%%%%%%%%%%%%%%%%%%%
\hspace*{1.2in}
\iif\ 
$\min_{1\leq i\leq\kappa}\left\{ \eeta(\cQ_i) \right\}\leq\eeta(\cQ'_j)\leq\max_{1\leq i\leq\kappa}\left\{ \eeta(\cQ_i) \right\}$
for every $j$ \tthen
\\
[3pt]
%%%%%%%%%%%%%%%%%%%%%%%%%%%%%%%%%%%%%%%%%%%%%%%%%%%%%%%%%%%%%%%%%%%%%%%%%%%%%%%%%%%%%%%%%%%%%
\hspace*{1.4in}
\iif\ 
$\mathsf{Effgap}_{\kappa}(\cP,\cQ'_1,\dots,\cQ'_\kappa)<\mathsf{Effgap}_{\kappa}(\cP,\cQ_1,\dots,\cQ_\kappa)$
\tthen
\\
[3pt]
%%%%%%%%%%%%%%%%%%%%%%%%%%%%%%%%%%%%%%%%%%%%%%%%%%%%%%%%%%%%%%%%%%%%%%%%%%%%%%%%%%%%%%%%%%%%%
\hspace*{2in}
$\cQ_1 \leftarrow \cQ'_1$ ;  
$\cQ_2 \leftarrow \cQ'_2$ ;  
$\dots\dots$ ;
$\cQ_\kappa \leftarrow \cQ'_\kappa$
\\
[3pt]
%%%%%%%%%%%%%%%%%%%%%%%%%%%%%%%%%%%%%%%%%%%%%%%%%%%%%%%%%%%%%%%%%%%%%%%%%%%%%%%%%%%%%%%%%%%%%
\hspace*{1.4in}
\textbf{endif}
\\
%%%%%%%%%%%%%%%%%%%%%%%%%%%%%%%%%%%%%%%%%%%%%%%%%%%%%%%%%%%%%%%%%%%%%%%%%%%%%%%%%%%%%%%%%%%%%
\hspace*{1.2in}
\textbf{endif}
\\
%%%%%%%%%%%%%%%%%%%%%%%%%%%%%%%%%%%%%%%%%%%%%%%%%%%%%%%%%%%%%%%%%%%%%%%%%%%%%%%%%%%%%%%%%%%%%
\hspace*{1.0in}
\textbf{endif}
\\
%%%%%%%%%%%%%%%%%%%%%%%%%%%%%%%%%%%%%%%%%%%%%%%%%%%%%%%%%%%%%%%%%%%%%%%%%%%%%%%%%%%%%%%%%%%%%
\hspace*{0.8in}
\textbf{endfor}
\\
%%%%%%%%%%%%%%%%%%%%%%%%%%%%%%%%%%%%%%%%%%%%%%%%%%%%%%%%%%%%%%%%%%%%%%%%%%%%%%%%%%%%%%%%%%%%%
\hspace*{0.6in}
\textbf{endif}
\\
%%%%%%%%%%%%%%%%%%%%%%%%%%%%%%%%%%%%%%%%%%%%%%%%%%%%%%%%%%%%%%%%%%%%%%%%%%%%%%%%%%%%%%%%%%%%%
\hspace*{0.4in}
\textbf{endif}
\\
%%%%%%%%%%%%%%%%%%%%%%%%%%%%%%%%%%%%%%%%%%%%%%%%%%%%%%%%%%%%%%%%%%%%%%%%%%%%%%%%%%%%%%%%%%%%%
\hspace*{0.2in}
\textbf{endfor}
\\
%%%%%%%%%%%%%%%%%%%%%%%%%%%%%%%%%%%%%%%%%%%%%%%%%%%%%%%%%%%%%%%%%%%%%%%%%%%%%%%%%%%%%%%%%%%%%
\textbf{endrepeat}
\\
%%%%%%%%%%%%%%%%%%%%%%%%%%%%%%%%%%%%%%%%%%%%%%%%%%%%%%%%%%%%%%%%%%%%%%%%%%%%%%%%%%%%%%%%%%%%%
\bottomrule
%%%%%%%%%%%%%%%%%%%%%%%%%%%%%%%%%%%%%%%%%%%%%%%%%%%%%%%%%%%%%%%%%%%%%%%%%%%%%%%%%%%%%%%%%%%%%
\end{tabular}
\captionof{figure}{\label{alg1}A local search algorithm for computing efficiency gap. The algorithm 
was implemented using the PYTHON language.}
\end{table}

One cannot provide any theoretical analysis of the randomized algorithm in \FI{alg1} because no such 
analysis is possible (due to Theorem~\ref{thm-hardness}) as stated formally in the following lemma.

\begin{lemma}\label{lemma-hardness}
Assuming $P\neq\NP$ (respectively, $RP\neq\NP$), 
there exists no deterministic local-search algorithms
(respectively, randomized local-search algorithms) 
that reaches a solution with a finite approximation ratio 
in polynomial time starting at any non-optimal valid solution.
\end{lemma}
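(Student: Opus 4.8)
The plan is to reduce from the approximation-hardness of \mwvp$_\kappa$ established in Theorem~\ref{thm-hardness}. Recall that a local-search algorithm (deterministic or randomized) that "reaches a solution with a finite approximation ratio $\rho$ in polynomial time starting at any non-optimal valid solution" is, in particular, a polynomial-time algorithm producing a $\rho$-approximate solution to \mwvp$_\kappa$ (run it from any valid starting partition, which exists by the Assumption in the problem statement). If such a deterministic algorithm existed, then by Theorem~\ref{thm-hardness} we would have $\mathrm{P}=\NP$; if such a randomized algorithm existed, we would obtain a randomized polynomial-time $\rho$-approximation. So the only real content is to observe that Theorem~\ref{thm-hardness}'s hardness is robust enough that the randomized version collapses $\RP=\NP$ rather than merely $\mathrm{BPP}=\NP$.

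First I would recall the structure of the instances used in the proof of Theorem~\ref{thm-hardness}: they encode a PARTITION instance so that $\opt_\kappa(\cP)=0$ in the ``yes'' case and $\opt_\kappa(\cP)=\Delta>0$ in the ``no'' case. The crucial point is that \emph{any} finite approximation ratio $\rho$ suffices to distinguish these two cases, since $\rho\cdot 0=0<\Delta$: an algorithm with a finite approximation guarantee must output a solution of value $0$ on ``yes'' instances and cannot output value $0$ on ``no'' instances. Thus a deterministic local-search algorithm with the claimed property decides PARTITION in polynomial time, giving $\mathrm{P}=\NP$, which proves the first half of the lemma by contraposition.

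Next I would handle the randomized case. Suppose a randomized local-search algorithm $A$ attains a finite approximation ratio in polynomial time; by standard convention this means that with probability at least, say, $1/2$ (or any constant bounded away from $0$) it outputs a solution within ratio $\rho$ of optimum, and it never outputs an infeasible solution. Run $A$ on an instance $\cP$ coming from a PARTITION instance. If PARTITION is a ``no'' instance, then $\opt_\kappa(\cP)=\Delta$, so \emph{every} valid solution---hence every possible output of $A$---has value $\geq\Delta>0$, and $A$ never outputs value $0$. If PARTITION is a ``yes'' instance, then $\opt_\kappa(\cP)=0$, and with probability $\geq 1/2$ the algorithm outputs a solution within a finite ratio of $0$, i.e.\ of value exactly $0$. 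This is precisely a one-sided-error (RP-type) decision procedure for PARTITION: no false positives, and true positives with constant probability. Boosting by independent repetition drives the error down, so $\NP\subseteq\RP$, i.e.\ $\RP=\NP$. Again contraposition yields the second half of the lemma.

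The main obstacle---really the only subtlety---is making precise what ``reaches a solution with a finite approximation ratio'' means for a \emph{randomized} algorithm and pinning down the success-probability convention so that the conclusion is $\RP=\NP$ (one-sided error) rather than the weaker $\mathrm{BPP}=\NP$. The zero-versus-positive gap in the Theorem~\ref{thm-hardness} instances is exactly what makes the one-sidedness work: infeasible-output-never plus value-$0$-with-constant-probability is an $\RP$ machine. I would also briefly note the harmless point that a local-search procedure needs a valid initial solution, which is guaranteed by the standing Assumption that $\cP$ has at least one $\kappa$-equipartition, and that "starting at any non-optimal valid solution" only strengthens the hypothesis we are refuting.
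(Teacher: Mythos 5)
Your proposal is correct and follows essentially the same route as the paper: both arguments reuse the reduction from the proof of Theorem~\ref{thm-hardness} and exploit the fact that on those instances the optimum is $0$ in the ``yes'' case and $\Delta>0$ in the ``no'' case, so any finite approximation ratio forces the algorithm to decide PARTITION (with the one-sided error giving $\RP=\NP$ in the randomized case). The paper's one-line proof merely adds the structural remark that the constructed instance has exactly one (trivial, efficiently constructible) non-optimal valid solution, which is what makes the ``starting at any non-optimal valid solution'' clause concrete; you cover the same point when you note that a valid starting partition is available and that the quantifier only strengthens the hypothesis being refuted.
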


\begin{proof}
This follows from the proof of Theorem~\ref{thm-hardness} once observes that 
the specific instance of the \mwvp$_{\kappa}$ problem created in the reduction of the theorem 
has exactly one (trivial) non-optimal solution and every other valid solution is an optimal solution.
\end{proof}

\subsection{Results and Implications}

Our resulting software was tested on four electoral data 
for the $2012$ election of the (federal) house of representatives for  
the US states of Wisconsin~\cite{web-map-wi,web-wi}, Texas~\cite{web-map-tx,web-tx}, 
Virginia~\cite{web-va,web-map-va} and Pennsylvania~\cite{web-pa,web-map-pa}.
Some summary statistics for these data are shown in Table~\ref{ttt1}.
The results of running the local-search algorithm in \FI{alg1} on the four real data-sets are tabulated in Table~\ref{t2},
and the corresponding maps are shown in 
%%%\FI{WImap-alg1}, \FI{TXmap-alg1}, \FI{VAmap-alg1} and \FI{PAmap-alg1}.
\FI{WImap-alg1}--\ref{PAmap-alg1}.
\emph{The results computed by our algorithm are truly outstanding: 
the final efficiency gap was lowered to $3.80\%$, $3.33\%$, $3.61\%$ and $8.64\%$
from $14.76\%$, $4.09\%$, $22.25\%$ and $23.80\%$ 
for Wisconsin, Texas, Virginia and Pennsylvania, respectively, in a small amount of time}.
Our empirical results clearly show that it is very much possible 
to design and implement a very fast algorithm 
that can ``un-gerrymander'' (with respect to the efficiency gap measure) 
the gerrymandered US house districts of three US states.

%%%%%%%%%%%%%%%%%%%%%%%%%%%%%%%%%%%%%%%%%%%%%%%%%%%%%%%%%%%%%%%%%%%%%%%%%%%%%%%%%%%%%%%%%%%%%%%%%%%%%%%
%%%%%%%%%%%%%%%%%%%%%%%%%%%%%%%%%%%%%%%%%%%%%%%%%%%%%%%%%%%%%%%%%%%%%%%%%%%%%%%%%%%%%%%%%%%%%%%%%%%%%%%
\captionsetup{width=\textwidth}
\begin{table}[h]
\begin{tabular}{l  |    c c  |  c c  |    c}
\toprule
  &
       \multicolumn{2}{c|}{\textbf{Vote share}}	& 
       \multicolumn{2}{c|}{\textbf{Number of Seats}}	& 
	        \multicolumn{1}{c}{\textbf{Normalized efficiency gap}}
\\
  &
	     \textbf{Democrats} & \textbf{GOP} &
	       \textbf{Democrats} & \textbf{GOP} &
	        \multicolumn{1}{c}{(current)}
\\
  &
	         $\pmb{\dfrac{\aalpha(\cP)}{\eeta(\cP)}}$ &  $\pmb{\dfrac{\bbeta(\cP)}{\eeta(\cP)}}$ &
	        & &
	        \multicolumn{1}{c}{$\pmb{\mathsf{Effgap}_\kappa(\cP,\dots\dots)/\eeta(\cP)}$}
\\
\midrule
%%%%%%%%%%%%%%%%%%%%%%%%%%%%%%%%%%%%%%%%%%%%%%%%%%%%%%%%%%%%%%%%%%%%%%%%%%%%%%%%%%%%%%%%%%%%%
Wisconsin & 
   $50.75\%$ & $49.25\%$ & $3$ & $5$ & $14.76 \%$
\\
\midrule
Texas & 
   $43.65\%$ & $56.35\%$ & $12$ & $24$ & $4.09 \%$ 
\\
\midrule
Virginia & 
      $51.96\%$ & $48.04\%$ & $4$ & $7$ & $22.25 \%$ 
\\
\midrule
Pennsylvania & 
      $50.65\%$ & $49.35\%$ & $5$ & $13$ & $23.80 \%$ 
\\
\bottomrule
\end{tabular}
%%%%%%%%%%%%%%%%%%%%%%%%%%%%%%%%%%%%%%%%%%%%%%%%%%%%%%%%%%%%%%%%%%%%%%%%%%%%%%%%%%%%%%%%%%%%%
\caption{\label{ttt1}Summary statistics for $2012$ election data for election of the (federal) house of representatives for  
the states of Texas, Wisconsin, Virginia and Pennsylvania.}
\end{table}
\setlength{\tabcolsep}{6pt}
%%%%%%%%%%%%%%%%%%%%%%%%%%%%%%%%%%%%%%%%%%%%%%%%%%%%%%%%%%%%%%%%%%%%%%%%%%%%%%%%%%%%%%%%%%%%%%%%%%%%%%%
%%%%%%%%%%%%%%%%%%%%%%%%%%%%%%%%%%%%%%%%%%%%%%%%%%%%%%%%%%%%%%%%%%%%%%%%%%%%%%%%%%%%%%%%%%%%%%%%%%%%%%%

A closer look at the new district maps shown in \FI{WImap-alg1}--\ref{PAmap-alg1}
also reveal the following interesting insights: 
%%%%%%%%%%%%%%%%%%%%%%%%%%%%%%%%%%%%%%%%%%%%%%%%%%%%%%%%%%%%%%%%%%%%%%%%%%%%%%%%%%%%%%%
\begin{description}
%%%%%%%%%%%%%%%%%%%%%%%%%%%%%%%%%%%%%%%%%%%%%%%%%%%%%%%%%%%%%%%%%%%%%%%%%%%%%%%%%%%%%%%
\item[Seat gain vs.\ efficiency gap.]
Lowering the efficiency gap from $15\%$ to $3.75\%$ for the state of Wisconsin did \emph{not} 
affect the total seat allocation ($3$ democrats vs.\ $5$ republicans) between the two parties.
Indeed, this further reinforces the assertion in~\cite{sm15} that 
\emph{the efficiency gap and partisan symmetry are different concepts}, and thus fewer absolute
difference of wasted votes does not necessarily lead to seat gains for the loosing party.
%%%%%%%%%%%%%%%%%%%%%%%%%%%%%%%%%%%%%%%%%%%%%%%%%%%%%%%%%%%%%%%%%%%%%%%%%%%%%%%%%%%%%%%
\item[Compactness vs.\ efficiency gap.]
The new district maps for the state of Virginia reveals an interesting aspect. 
Our new district map have fewer districts that are oddly shaped
compared to the map used for the $2012$ 
election\footnote{Virginia is one of the most gerrymandered states in the country, both on the congressional and state levels, 
based on lack of compactness and contiguity of its districts. 
Virginia congressional districts are ranked the 5th worst in the country because counties and 
cities are broken into multiple pieces to create heavily partisan districts~\cite{web-compact}.},
even though minimizing wasted votes 
does not take into consideration shapes of districts. 
%%%%%%%%%%%%%%%%%%%%%%%%%%%%%%%%%%%%%%%%%%%%%%%%%%%%%%%%%%%%%%%%%%%%%%%%%%%%%%%%%%%%%%%%%%%%%%%%%%%%%%%
%%%%%%%%%%%%%%%%%%%%%%%%%%%%%%%%%%%%%%%%%%%%%%%%%%%%%%%%%%%%%%%%%%%%%%%%%%%%%%%%%%%%%%%
\item[How natural are gerrymandered districts?]
Since our algorithm applies a sequence of carefully chosen semi-random \emph{perturbations} 
to the original gerrymandered districts to drastically lower the absolute difference of wasted votes,
one can \emph{hypothesize} that the original gerrymandered districts are far from 
being a product of \emph{arbitrarily random} decisions. However, to reach a definitive conclusion
regarding this point, one would need to construct a suitable null model, which we do not have yet.
%%%%%%%%%%%%%%%%%%%%%%%%%%%%%%%%%%%%%%%%%%%%%%%%%%%%%%%%%%%%%%%%%%%%%%%%%%%%%%%%%%%%%%%
\end{description}

%%%%%%%%%%%%%%%%%%%%%%%%%%%%%%%%%%%%%%%%%%%%%%%%%%%%%%%%%%%%%%%%%%%%%%%%%%%%%%%%%%%%%%%%%%%%%%%%%%%%%%%
%% Table for maps for Alg local srch
%%%%%%%%%%%%%%%%%%%%%%%%%%%%%%%%%%%%%%%%%%%%%%%%%%%%%%%%%%%%%%%%%%%%%%%%%%%%%%%%%%%%%%%%%%%%%%%%%%%%%%%
\begin{figure}[htbp]
\captionsetup{width=\textwidth}
\begin{tabular}{l     c c | c c  c    c c}
\toprule
  &
       \multicolumn{4}{c}{\textbf{Number of Seats}}	& 
					$\,\,\,\,\,\,\,\,\,\,\,\,\,\,\,\,$ &
	        \multicolumn{2}{c}{\hspace*{-0.2in}\textbf{Normalized efficiency gap}}
\\
\cmidrule{2-5}
  &
       \multicolumn{2}{c|}{\textbf{Original}}	& 
       \multicolumn{2}{c}{\textbf{New}}	& 
					$\,\,\,\,\,\,\,\,\,\,\,\,\,\,\,\,$ &
	     \multicolumn{2}{c}{$\pmb{\mathsf{Effgap}_\kappa(\cP,\dots\dots)/\eeta(\cP)}\,\,\,\,\,\,\,\,\,\,\,\,$}
\\
  &
	       \textbf{Democrats} & \textbf{GOP} &
	       \textbf{Democrats} & \textbf{GOP} &
					$\,\,\,\,\,\,\,\,\,\,\,\,\,\,\,\,$ &
          \multicolumn{1}{c}{\textbf{Original}}	& 
          \multicolumn{1}{c}{\textbf{New}}
\\
\midrule
%%%%%%%%%%%%%%%%%%%%%%%%%%%%%%%%%%%%%%%%%%%%%%%%%%%%%%%%%%%%%%%%%%%%%%%%%%%%%%%%%%%%%%%%%%%%%
Wisconsin & $3$ & $5$ & $3$ & $5$ & & $14.76 \%$ & $3.80 \%$ 
\\
\midrule
Texas & $12$ & $24$ & $12$ & $24$ & & $4.09 \%$ & $3.33 \%$
\\
\midrule
Virginia & $3$ & $8$ & $5$ & $6$ & & $22.25 \%$ & $3.61 \%$
\\
\midrule
Pennsylvania & $5$ & $13$ & $6$ & $12$ & & $23.80 \%$ & $8.64 \%$
\\
\bottomrule
%%%%%%%%%%%%%%%%%%%%%%%%%%%%%%%%%%%%%%%%%%%%%%%%%%%%%%%%%%%%%%%%%%%%%%%%%%%%%%%%%%%%%%%%%%%%%
\end{tabular}
\captionof{table}{\label{t2}Redistricting results obtained by running the algorithm in \FI{alg1}
for the states of Texas, Wisconsin, Virginia and Pennsylvania in comparison to the $2012$ district plans.}
\end{figure}
%%%%%%%%%%%%%%%%%%%%%%%%%%%%%%%%%%%%%%%%%%%%%%%%%%%%%%%%%%%%%%%%%%%%%%%%%%%%%%%%%%%%%%%%%%%%%%%%%%%%%%%
%% end of table for maps for Alg local srch
%%%%%%%%%%%%%%%%%%%%%%%%%%%%%%%%%%%%%%%%%%%%%%%%%%%%%%%%%%%%%%%%%%%%%%%%%%%%%%%%%%%%%%%%%%%%%%%%%%%%%%%

%%%%%%%%%%%%%%%%%%%%%%%%%%%%%%%%%%%%%%%%%%%%%%%%%%%%%%%%%%%%%%%%%%%%%%%%%%%%%%%%%%%%%%%%%%%%%%%%%%%%%%%
%% figure for maps for Alg local srch
%%%%%%%%%%%%%%%%%%%%%%%%%%%%%%%%%%%%%%%%%%%%%%%%%%%%%%%%%%%%%%%%%%%%%%%%%%%%%%%%%%%%%%%%%%%%%%%%%%%%%%%
\begin{figure}[p]
\centerline{\includegraphics[scale=0.95]{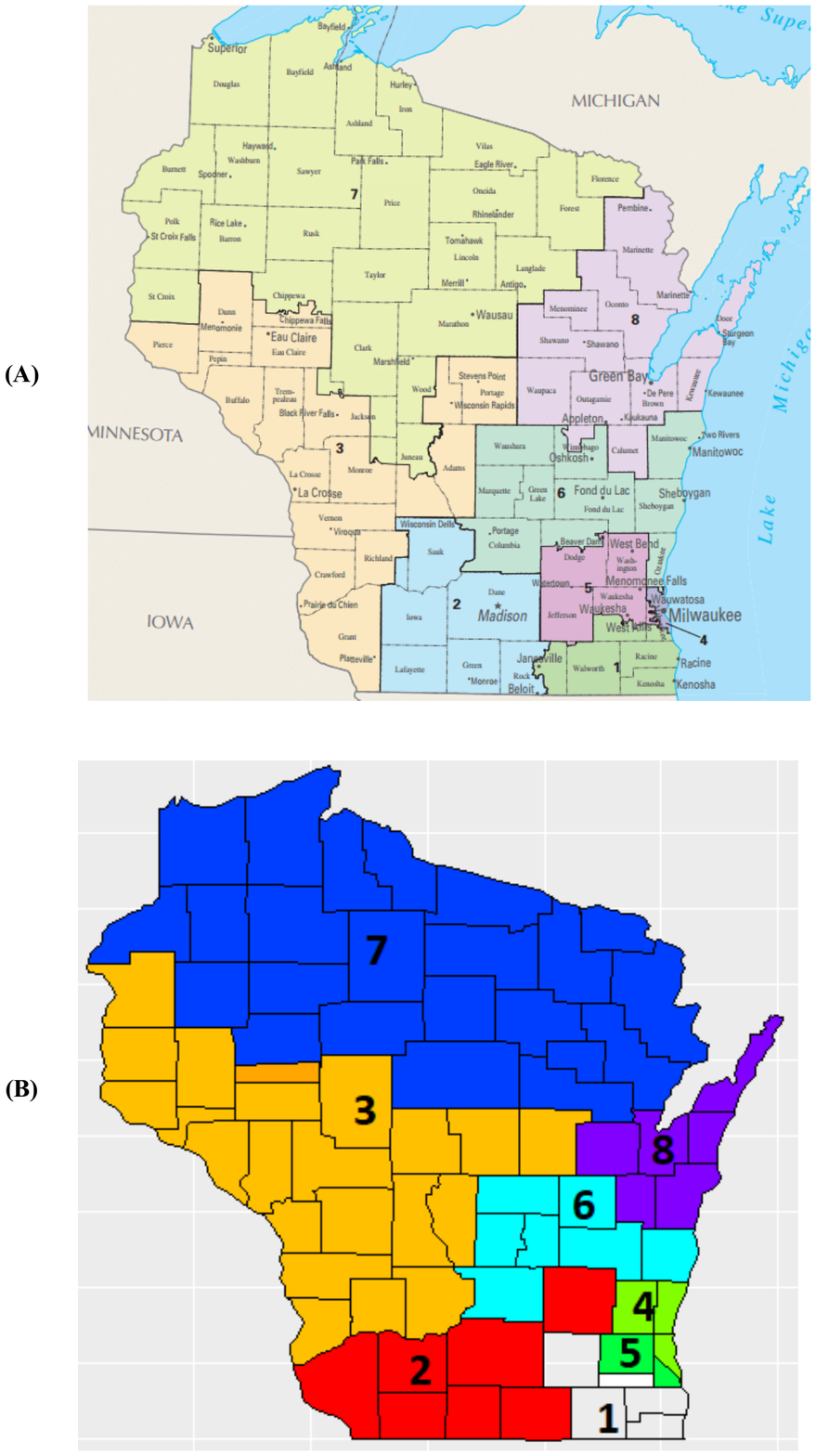}}
\caption{The district maps of Wisconsin: 
\textbf{(A)} original~\cite{web-map-wi}
and \textbf{(B)} after applying our local search algorithm in \FI{alg1}.}
\label{WImap-alg1}
\end{figure}

\begin{figure}[p]
\centerline{\includegraphics[scale=1]{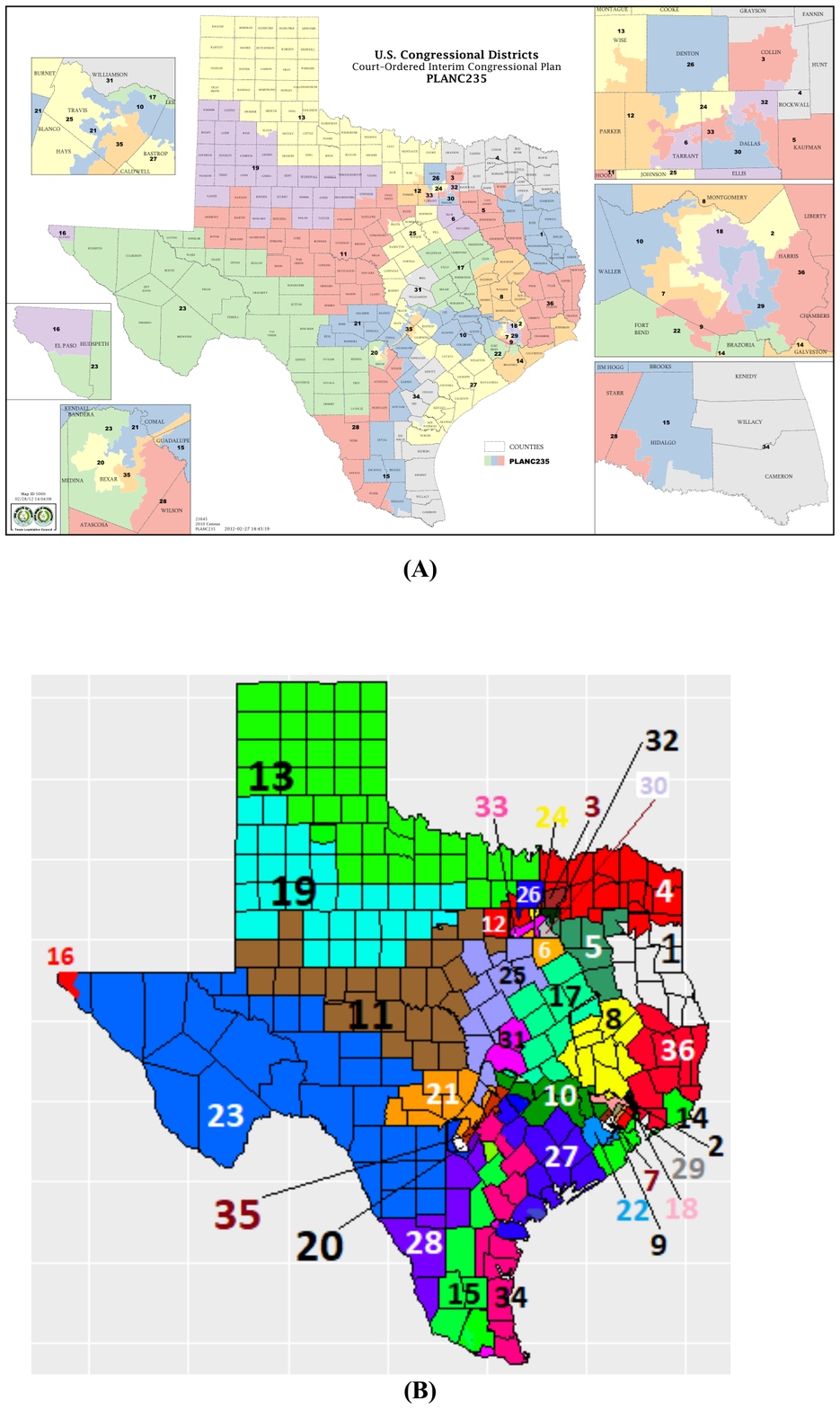}}
\caption{The district maps of Texas: 
\textbf{({A})} original~\cite{web-map-tx}
and \textbf{({B})} after applying our local search algorithm in \FI{alg1}.}
\label{TXmap-alg1}
\end{figure}

\begin{figure}[p]
\centerline{\includegraphics[scale=1]{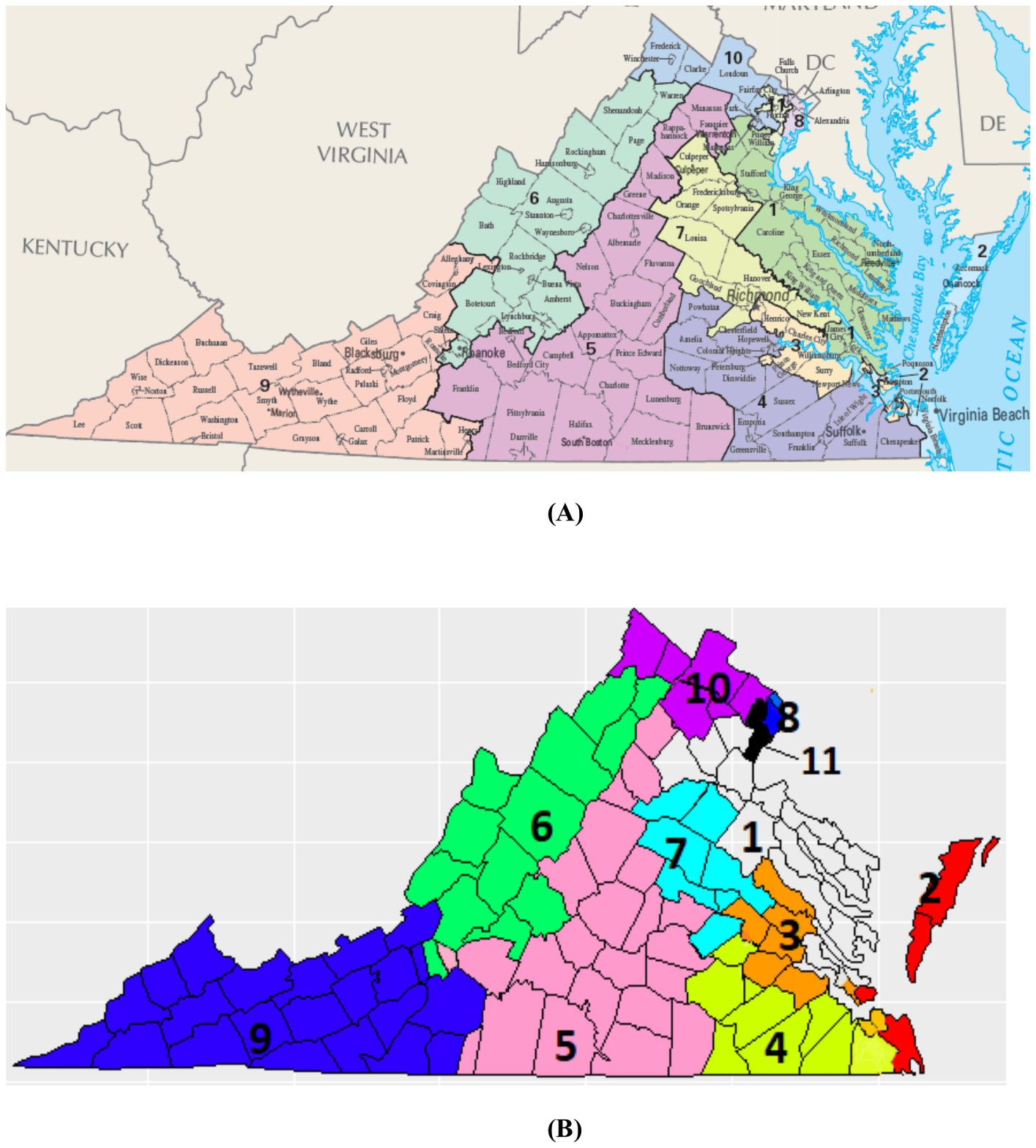}}
\caption{The district maps of Virginia: 
\textbf{({A})} original~\cite{web-map-va}
and \textbf{({B})} after applying our local search algorithm in \FI{alg1}.}
\label{VAmap-alg1}
\end{figure}

\begin{figure}[p]
\centerline{\includegraphics[scale=1]{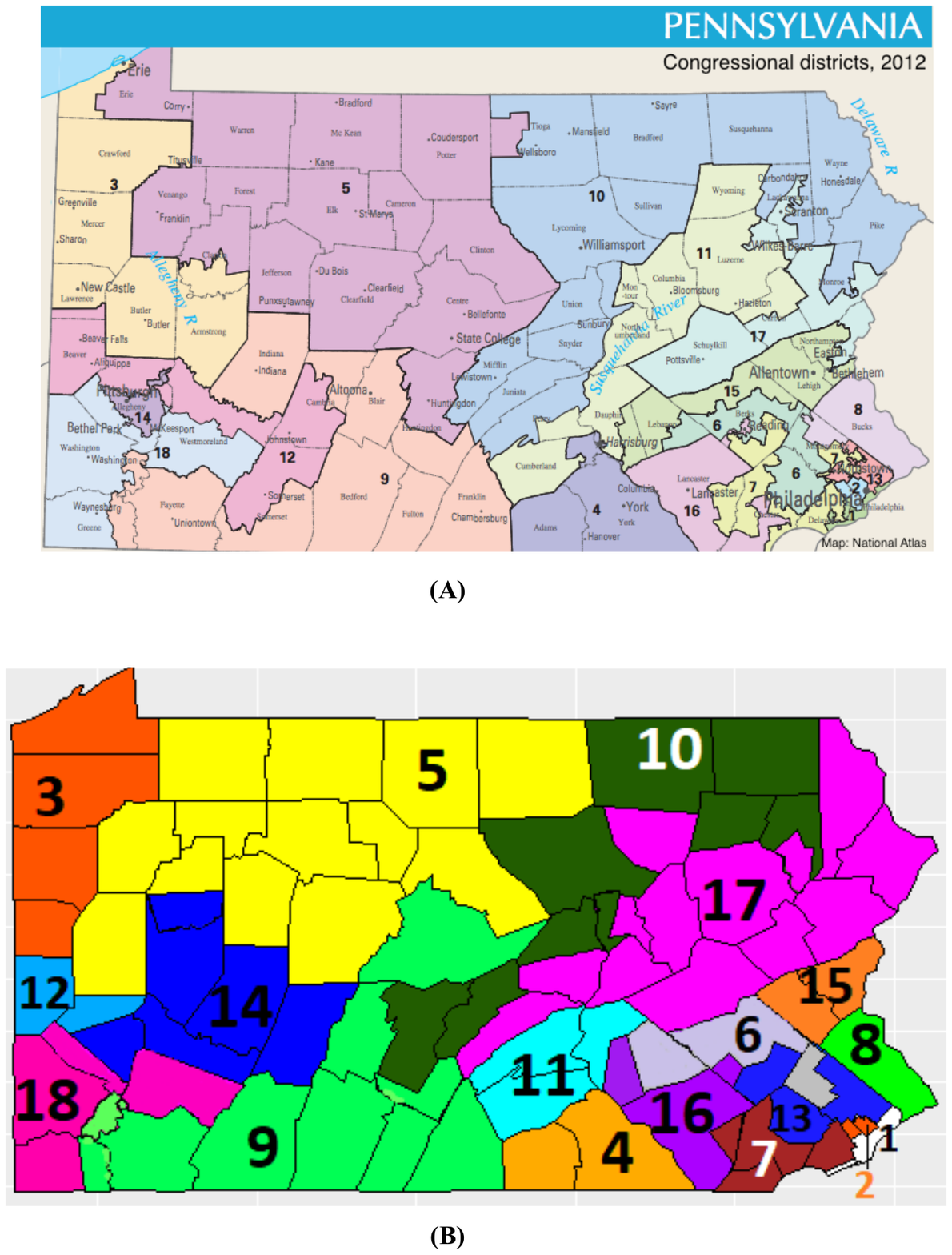}}
\caption{The district maps of Pennsylvania: 
\textbf{({A})} original~\cite{web-map-pa}
and \textbf{({B})} after applying our local search algorithm in \FI{alg1}.
}
\label{PAmap-alg1}
\end{figure}

%%%%%%%%%%%%%%%%%%%%%%%%%%%%%%%%%%%%%%%%%%%%%%%%%%%%%%%%%%%%%%%%%%%%%%%%%%%%%%%%%%%%%%%%%%%%%%%%%%%%%%%
%% end of figure for maps for Alg local srch
%%%%%%%%%%%%%%%%%%%%%%%%%%%%%%%%%%%%%%%%%%%%%%%%%%%%%%%%%%%%%%%%%%%%%%%%%%%%%%%%%%%%%%%%%%%%%%%%%%%%%%%
\clearpage

\section{Conclusion and future research}

In this article we have performed algorithmic analysis of the recently introduced efficiency gap measure 
for gerrymandering both from a theoretical (computational complexity) as well as a practical (software 
development and testing on real data) point of view. 
The main objective of the paper was to provide a scientific analysis of the efficiency gap measure and 
to 
provide a crucial supporting hand to remove partisan gerrymandering 
should the US courts decide to recognize efficiency gap as at least a partially valid measure of 
gerrymandering. Of course, final words on resolving gerrymandering is up to the US judicial systems. 
%

%%%%%%%%%%%%%%%%%%%%%%%%%%%%%%%%%%%%%%%%%%%%%%%%%%%%%%%%%%%%%%%%%%%%%%%%%%%%%%%%%%%%%%%%%%%%%%%%%%%%%%%

%%%%%%%%%%%%%%%%%%%%%%%%%%%%%%%%%%%%%%%%%%%%%%%%%%%%%%%%%%%%%%%%%%%%%%%%%%%%%%%%%%%%%%%%%%%%%%%%%%%%%%%

%In $2006$, the US Supreme Court opined 
%(see League of United Latin American Citizens v.\ Perry, $548$ US $399$, 2006)
%that a measure of \emph{partisan symmetry} may be a helpful
%tool to understand and remedy gerrymandering.
%
\end{document}